%% file: main.tex
\pdfoutput=1
\documentclass{article}
\usepackage{iclr2027_preprint,times}
\iclrfinalcopy
\usepackage[T1]{fontenc}
\usepackage[utf8]{inputenc}
\usepackage{microtype}
\usepackage{amsmath,amssymb,amsthm}
\newtheoremstyle{compact}{3pt}{3pt}{\itshape}{}{\bfseries}{.}{.5em}{}
\theoremstyle{compact}
\newtheorem{proposition}{Proposition}
\newtheorem{lemma}{Lemma}
\usepackage{graphicx,booktabs,tabularx,array}
\usepackage{placeins}
\usepackage{xcolor}
\usepackage{hyperref}
\usepackage{url}
\definecolor{linkblue}{HTML}{285F8D}
\hypersetup{colorlinks=true,linkcolor=linkblue,citecolor=linkblue,urlcolor=linkblue,
 pdftitle={EfficientAgent: What Makes KV Cache Offloading Work for Concurrent Agents?},pdfauthor={Kunming Shao, Jierun Chen, Jiangnan Yu, Xiao-Hui Li, Chaofan Tao, Yanli Wang, Huanxin Lin, Kwang-Ting Cheng, Chi Ying Tsui, Haoli Bai}}
\newcommand{\method}{\textsc{EfficientAgent}}
\newcommand{\qthree}{Qwen3-Coder-30B-A3B-Instruct}
\newcolumntype{Y}{>{\raggedright\arraybackslash}X}
\title{EfficientAgent: What Makes KV Cache \\ Offloading Work for Concurrent Agents?}
\makeatletter
\let\preprint@maketitle\@maketitle
\renewcommand{\@maketitle}{\preprint@maketitle\lhead{Preprint}}
\makeatother
\author{Kunming Shao$^{1}$\quad Jierun Chen$^{2}$\quad Jiangnan Yu$^{1}$\quad Xiao-Hui Li$^{2}$\quad Chaofan Tao$^{3}$\\
\textbf{Yanli Wang$^{4}$\quad Huanxin Lin$^{2}$\quad Kwang-Ting Cheng$^{1}$\quad Chi Ying Tsui$^{1}$\quad Haoli Bai$^{2,\dagger}$}\\
\normalfont $^{1}$The Hong Kong University of Science and Technology\quad $^{2}$Huawei Technologies Ltd.\\
\normalfont $^{3}$The University of Hong Kong\quad $^{4}$Sun Yat-sen University\\
\normalfont $^{\dagger}$Corresponding author}
\begin{document}
\maketitle

\begin{abstract}
LLM agents resend their whole conversation on every turn, and most of it was already processed on the previous turn. Serving systems avoid recomputing it by caching its key--value (KV) state and, when GPU memory runs out, by offloading that state to host memory. For agents, offloading gives inconsistent results: on the same coding-agent workload it speeds up one deployment, slows down another, and changes nothing on a third, even where loading a token back is several times cheaper than recomputing it. The reason is that cached state must survive until it is used again. While one agent waits for its tool, the server processes the contexts of all other agents, so an agent's prefix is reused only if the host tier holds the reusable context of the whole agent pool, which we call the reuse working set. A smaller tier keeps writing state that is evicted before anyone reads it. We present \method{}, which sizes and manages the host tier by this working set. A stack-distance model estimates the working set from agent histories to size the host tier; its predictions, made before the experiments, located the capacity at which offloading starts to pay. When the tier is too small, a runtime policy stops writing large refills of evicted context and keeps extending prefixes that are still cached; when the tier is large enough, it writes everything. On SWE-bench Verified coding agents, a host tier sized to the estimated working set cuts recomputed prompt tokens by 93\% and end-to-end time by 39\%. With a small fixed tier, the policy cuts recomputation by 35\%; with a large tier, it avoids the 4.3-fold increase caused by always filtering writes. Across three GPU types and two models, offloading pays off when the GPU has little compute per byte of host bandwidth and the host tier holds the working set. Code is available at \url{https://github.com/KunmingSHAO/efficientagent_release}.
\end{abstract}

\section{Introduction}
\label{sec:intro}

An LLM coding agent works in a loop: the model reads the task, calls a tool such as a test run, reads the result, and calls the model again \citep{wang2024openhands,yang2024sweagent}. Every call resends the whole conversation so far. In traces of coding agents solving SWE-bench Verified, a benchmark of real GitHub issues \citep{jimenez2024swebench}, 98.1\% of all prompt tokens had already been processed in the same task's previous call. Reusing this work is the main way to serve agents cheaply: prefix caching, which keeps the key--value (KV) state of processed tokens and reuses it when a later prompt starts with the same tokens, alone makes coding-agent runs up to 11.1$\times$ faster (Appendix~\ref{app:hardware}). The cost of agent inference also limits ML research itself, since reinforcement-learning rollouts, evaluation, and test-time scaling run many agents at once \citep{luo2025deepswe,cao2025skyrlagent,pan2025swegym}.

\begin{figure}[t]
 \centering
 \includegraphics[width=\linewidth]{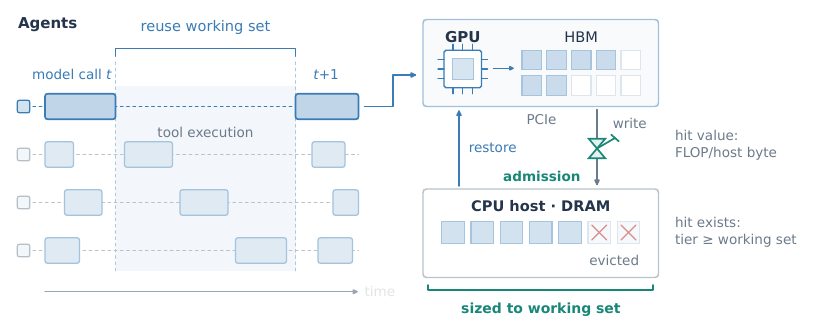}
 \caption{Overview of our \textbf{\method{}.} Left: between two turns of one agent ($t$ and $t{+}1$), the server processes the other agents' calls; this reuse working set decides whether the agent's cached prefix is still in the host tier when it returns. Right: KV state moves between the GPU's HBM and the CPU host tier over PCIe; a host hit is worth the recomputation it avoids per host-link byte, and it exists only if the tier holds the working set. Teal marks \method{}: it sizes the host tier to the working set and admits host writes under pressure (Section~\ref{sec:method}).}
 \label{fig:overview}
\end{figure}

When GPU memory cannot hold every agent's context, KV offloading keeps KV in a CPU-memory \emph{host tier} and loads it back when a later prompt needs it \citep{liu2025lmcache,gao2024cachedattention,yu2025pensieve}. Current systems decide per token: load from host memory whenever that is cheaper than recomputing, and store newly computed KV for later use. By this logic, offloading should always help agents. It does not. On the same coding-agent workload, offloading makes runs faster on RTX~3090, slower on H800, and no faster on H20, although on H20 loading a token is more than six times cheaper than recomputing it (Section~\ref{sec:model}). When we replay the recorded H20 agent calls and change only the size of the host tier, a tier four times larger cuts \emph{makespan}, the time until the last task finishes, by 39\%.

\textbf{Cached state must survive until it is reused.} While one agent runs its tool, the server processes the requests of all other agents, and their contexts push older state out of the cache. With a 5\,GiB host tier per GPU, waiting behind other agents stretches the gap between two turns of one agent to a median of 9.0\,s, 14 times the agent's own time between calls. Almost everything an agent writes is needed again: 97.3\% of the new KV of a call reappears in the task's next call. Yet 79.2\% of the KV chunks written to the host tier were evicted before that next call and had to be recomputed. Whether a prefix survives depends on how much other context the server processes between two uses of it. We call this amount the \emph{reuse working set}; it grows with the number of agents and their context length.

Two factors therefore decide whether offloading helps. The hardware decides how much a host hit saves, and across GPUs this follows peak compute per byte of host bandwidth. The working set decides whether there is a hit at all. The write policy should follow the second factor: if the host tier is smaller than the working set, writing less protects the prefixes the tier can keep; if the tier is larger, writing less throws away state that would have been reused.

\method{} uses the reuse working set in two ways. A stack-distance model, which counts the distinct KV processed between two uses of each chunk, estimates the working set from agent histories, predicts how much recomputation each host-tier size leaves, and tells how much host memory to provision. At run time, \method{} controls what enters the host tier: while the working set exceeds the tier, it keeps extending prefixes that are still cached and stops writing large refills of evicted context, the same load control that keeps operating-system memory from thrashing \citep{denning1980}; otherwise it writes everything.

Our contributions are threefold. \textbf{First}, a model of concurrent agent serving that separates what a host hit is worth from whether the hit happens, sizes the host tier by the reuse working set, and predicts before deployment where more host memory stops helping. \textbf{Second}, working-set-aware admission for LMCache, a KV-caching layer for the vLLM serving engine \citep{liu2025lmcache,kwon2023vllm}: a constant-time rule per request that declines refills only while the working-set estimate exceeds the tier and the tier is full and evicting, together with an LRU proposition that identifies which writes a tier can decline without losing a hit. \textbf{Third}, two tools for measuring agent serving, the cache-stable prompt length and dependency-preserving replay, and the findings they produce: (i)~a host tier sized to the working-set estimate removes 93\% of recomputed prefill, with the transition where the model places it; (ii)~always filtering writes helps below the working set and hurts above it, while working-set-aware admission keeps the gain and avoids the loss; (iii)~across GPU designs and models, offloading pays when compute per host-link byte is low and the host tier holds the working set.

\begin{figure}[t]
 \centering
 \includegraphics[width=\linewidth]{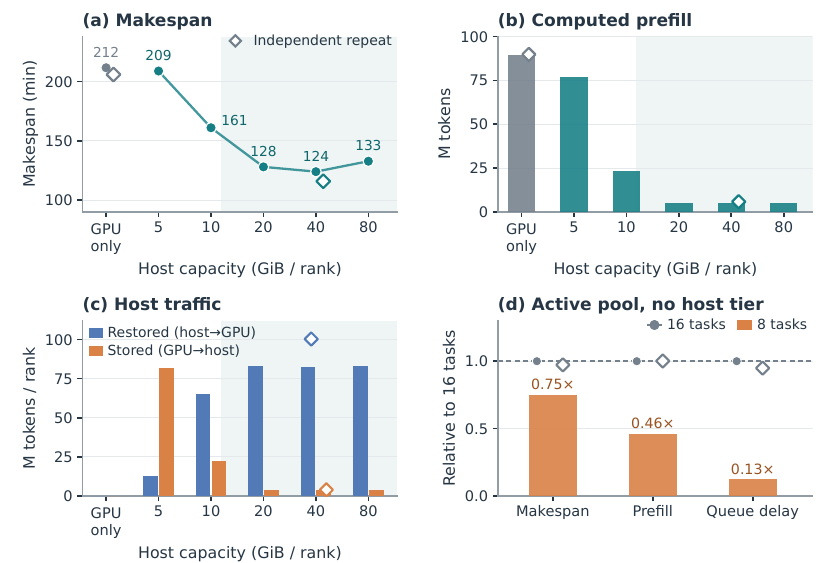}
 \caption{\textbf{A host tier provisioned at the working-set estimate cuts computed prefill by 93.1\% and makespan by 38.7\% (5 to 20\,GiB per rank).} Replay of SWE-bench Verified agent calls on eight H20 GPUs with fixed GPU KV memory; host capacity is per GPU (rank). (a)~Makespan, (b)~computed prefill, (c)~host traffic per rank; shading: capacities above the 11.4\,GiB working-set estimate (Section~\ref{sec:model}); diamonds: independent repeats. (d)~Without a host tier, halving the active pool cuts makespan, computed prefill, and queue delay.}
 \label{fig:capacity}
\end{figure}

\section{Agent Workflows and the Reuse Boundary}
\label{sec:background}

\paragraph{A closed-loop workload.}
An agent issues its next model call after the preceding response and intervening tool work have completed. With many agents, these dependencies form a closed loop: serving latency shifts the timing of later calls. We call the tasks in progress the \emph{active pool} and denote its size by $A$; the serving engine separately caps how many requests it runs at once. A cache policy acts on serving and thus on how tasks interleave; we explain makespan by three mechanisms: \emph{computed prefill}, the prompt tokens whose KV no cache tier held and the GPU therefore computed; host \emph{writes} and \emph{restores}, KV copied from GPU to host and back; and \emph{preemptions}, requests the engine suspends when GPU KV memory runs out and later resumes by recomputing their KV.

\paragraph{Exact-prefix reuse.}
For prompt $c_t$ and the previously processed sequence $z_{t-1}$, reusable state extends along their longest common prefix, subject to block alignment and residency, that is, whether a cache tier still holds the KV. We call its length, the number of prompt tokens unchanged since the previous turn, the \emph{cache-stable prompt length}; it bounds reuse. Cache keys hash the entire preceding context \citep{vllmPrefixCaching}, so rewriting early history invalidates reuse of later text even when that text is unchanged. Context folding, which replaces earlier history with a summary or a truncation, can thus reduce submitted tokens while increasing fresh prefill: on RTX~3090, history condensers that rewrite early events cut KV utilization but lowered the prefix-hit rate from 96.5\% to 51--61\% (Appendix~\ref{app:folding}). Our replay keeps every recorded prompt unchanged.

\paragraph{KV footprint and hardware balance.}
Let $L$ be the layer count, $H_{\mathrm{kv}}$ the number of KV heads, $d$ the head dimension, and $s$ bytes per element. Tensor parallelism (TP) splits each layer's KV heads across GPUs (ranks), so each GPU holds its shard of every token's KV; we therefore state KV sizes and host capacity per rank. KV bytes per token on a rank are
\begin{equation}
 \beta_{\mathrm{rank}}=2Lds\max\!\left(1,\left\lceil H_{\mathrm{kv}}/TP\right\rceil\right).
 \label{eq:footprint}
\end{equation}
For \qthree{} in BF16, $L=48$, $H_{\mathrm{kv}}=4$, and $d=128$ give 24\,KiB/token/rank at TP8 and 48\,KiB at TP2; for the dense Qwen2.5-Coder-32B-Instruct, $L=64$ and $H_{\mathrm{kv}}=8$ give 32 and 128\,KiB. Host restoration consumes CPU--GPU transfer bandwidth, while recomputation uses GPU execution capacity; H800 provides 6.7--7.0 times more peak compute per host-link byte than H20 and RTX~3090 (Section~\ref{sec:deployments}).

\section{From Offload Cost to Reuse Working Sets}
\label{sec:model}

\subsection{Valuing a recoverable prefix}

The \emph{Offload Benefit Ratio} (OBR) is the fraction of recomputation time that restoring a missing prefix of $n$ tokens saves: one means a free restore, and zero no saving. With effective prefill time $t_{\mathrm{pf}}$ per token at the operating point, effective host-to-GPU bandwidth $B_{\mathrm{H2D}}^{\mathrm{eff}}$, and a fixed restoration overhead $\tau_{\mathrm{load}}$, the times to recompute and to restore the prefix determine OBR:
\begin{align}
 T_{\mathrm{rec}}(n)&=n\,t_{\mathrm{pf}}, &
 T_{\mathrm{load}}(n)&=n\,\beta_{\mathrm{rank}}/B_{\mathrm{H2D}}^{\mathrm{eff}}+\tau_{\mathrm{load}},\nonumber\\
 \mathrm{OBR}(n)&=1-\frac{T_{\mathrm{load}}(n)}{T_{\mathrm{rec}}(n)}.
 \label{eq:obr}
\end{align}
On H20, restoring from the host tier takes a median 0.86\,$\mu$s per token per rank, whereas prefill of the model's 3.3B activated parameters needs at least 5.6\,$\mu$s per token even at the eight GPUs' peak dense BF16 throughput; OBR for long prefixes is therefore at least 0.84 (Appendix~\ref{app:model}), and at a fixed model this bound falls as peak compute per host-link byte rises (Section~\ref{sec:deployments}).

Let $U_0,U_1$ be computed prefill tokens without and with a host tier, and let $R$ be restored tokens beyond GPU-resident coverage. For the same submitted work, the serving-cost balance is
\begin{equation}
 \Delta T_{\mathrm{serve}}
 \approx -(U_0-U_1)t_{\mathrm{pf}}
       +R\,\beta_{\mathrm{rank}}/B_{\mathrm{H2D}}^{\mathrm{eff}}
       +\Delta T_{\mathrm{other}}.
 \label{eq:net}
\end{equation}
The final term includes write and restoration setup costs and exposed scheduling and overlap delays. This separates \emph{how valuable recovered computation is} from \emph{how much recovery residency enables}: a host tier that only shifts reuse from the GPU to the host adds transfer work without saving compute.

\subsection{Predicting survival across agent turns}

Each full KV chunk is identified by a content key that depends on its entire prefix. Let the \emph{reuse distance} $D(k)$ count distinct other chunks referenced between consecutive references to chunk $k$. In a fully associative LRU reference model with capacity $C$ bytes and chunk size $b$ tokens, a previously stored chunk survives when
\begin{equation}
 D(k)<\left\lfloor\frac{C}{b\beta_{\mathrm{rank}}}\right\rfloor.
 \label{eq:distance}
\end{equation}
This is the stack-distance principle of storage-hierarchy analysis \citep{mattson1970}. Prefix restoration adds a structural constraint: usable coverage is the consecutive run of available chunks from the start of the prefix. We compute host prefix coverage from chunk survival, then subtract the GPU-resident portion to estimate useful restoration. Appendix~\ref{app:model} gives the calculation.

In a backlogged pool, where each of $A$ agents has a call waiting and contexts have comparable lengths $\bar N$, the other $A-1$ agents' contexts are referenced before one agent returns, giving the working-set scale
\begin{equation}
 \widehat C_{\mathrm{reuse}}\approx(A-1)\bar N\beta_{\mathrm{rank}}.
 \label{eq:workingset}
\end{equation}
For analysis, $\bar N$ is the mean prompt length of the trace; Equation~\ref{eq:workingset} gives the scale, and the trace-based model keeps the actual ordering, sharing, and lengths.

Two ratios summarize deployment pressure: $\gamma_G=A\bar N/K_G$, the pool's context over the GPU's KV capacity of $K_G$ tokens, and $\gamma_H=\widehat C_{\mathrm{reuse}}/C_H$, the working set over the host capacity $C_H$. $\gamma_G>1$ creates an opportunity for host recovery, $\gamma_H>1$ means the host tier cannot hold what the pool reuses, and OBR values the hits that remain. On H20 (Section~\ref{sec:setup}), $A=16$ gives $\gamma_G\approx1.55$ and $\widehat C_{\mathrm{reuse}}\approx11.4$\,GiB per rank, so $\gamma_H\approx2.3$, 1.1, and 0.57 at 5, 10, and 20\,GiB. The measured capacity transition (Section~\ref{sec:capacity_results}) lies where $\gamma_H$ crosses one. GPU design enters through OBR.

\section{Working-Set-Aware Admission Scheduling}
\label{sec:method}

At run time, \method{} controls which part of the reuse working set the host tier holds by scheduling admission to the tier. The working set grows with pool size and context length (Section~\ref{sec:implications}), while host memory is fixed per server, so a tier sized for today's pool falls below the working set as pools and contexts grow; admission keeps a fixed budget effective in that regime. When the serving scheduler first looks up a waiting request's cached prefix, the runtime decides once whether to store the request's new KV in the host tier (Figure~\ref{fig:overview}b). The policy, capacity-conditioned write admission, combines a feedforward working-set estimate, which decides whether the active pool can thrash the tier, with feedback from the tier, which confirms that the tier is full and evicting.

\paragraph{Load control for the host tier.}
The policy applies the working-set principle of multiprogrammed memory \citep{denning1968,denning1980}: when the active programs' combined working set exceeds memory, the system thrashes, and load control keeps the working sets of a subset resident. A host tier below the pool's reuse working set ($\gamma_H>1$) thrashes in the same way (Figure~\ref{fig:allruns}b). Under pressure, the runtime keeps saving incremental extensions of host-resident prefixes and declines large refills, which re-store context the tier has already evicted; when pressure subsides, refills are admitted again.

\paragraph{Pressure signal and write rule.}
The runtime estimates $A$ and $\bar N$ of Equation~\ref{eq:workingset} over a recent window, from the tasks with recent requests and their prompt lengths, and the host tier reports whether it is full and evicting (Appendix~\ref{app:runtime}). A request with $n$ prompt tokens, of which the host tier already holds the first $h$, needs $u=\lfloor n/b\rfloor-\lfloor h/b\rfloor$ new full chunks. With write threshold $\kappa$, the runtime decides
\begin{equation}
 p_t=\mathbf{1}\bigl[\widehat C_{\mathrm{reuse}}>C_H\bigr]\land\mathbf{1}\bigl[\text{tier full and evicting}\bigr],
 \qquad
 \mathrm{save}(r)=\neg\bigl(p_t\land u>\kappa\bigr).
 \label{eq:save}
\end{equation}
Pressure $p_t$ thus requires both that the working-set estimate exceed the tier ($\gamma_H>1$) and that the tier be full and evicting; without a recent report, the estimate alone decides. A tier that holds the working set while evicting cold chunks stays unrestricted. A small $u$ extends a host-resident prefix, and a large $u$ refills state the tier does not hold. A skip stores none of the request's new KV and holds for the rest of the request. The rule costs one constant-time check per request, with its parameters fixed before the admission runs; any $\kappa$ from 2 to 16 declines 89.6--97.0\% of the new-chunk writes at 5\,GiB (95.3\% at $\kappa=8$; Appendix~\ref{app:runtime}). Before each copy, deduplication drops chunks the tier already holds. Offload without write admission and \emph{fixed write admission}, which applies the filter to every request with deduplication, are the endpoints $p_t\equiv0$ and $p_t\equiv1$ of this rule; capacity-conditioned admission selects between them per request.

\paragraph{Which writes a tier can decline.}
Take the LRU reference model of Equation~\ref{eq:distance} with a fixed reference stream and $K_H=\lfloor C_H/(b\beta_{\mathrm{rank}})\rfloor$ chunks, where a miss inserts its chunk unless declined.
\begin{proposition}
\label{prop:admission}
If an LRU tier of $K_H$ chunks declines insertion only at misses whose chunk is referenced next at reuse distance $D\geq K_H$, or never again, then every reference that hits under full admission also hits. Such exclusions weakly increase the hits of every chunk and the total.
\end{proposition}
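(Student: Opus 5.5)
The plan is to compare two LRU recency stacks driven by the same reference stream: $S_F$ under full admission and $S_E$ under the exclusion rule. The reuse distance $D(k)$ of Equation~\ref{eq:distance} is a property of the stream alone. It counts distinct other chunks referenced between consecutive references, so it is the same under both policies. Under full admission, a reference to $k$ hits exactly when $D(k)<K_H$, which is the stack-distance principle \citep{mattson1970}.

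To model the excluding tier, I will describe it as LRU with bypass. At any time, call a chunk \emph{excluded} if its most recent reference was a miss whose insertion was declined. The key structural claim is that the contents of the excluding tier are exactly the top $K_H$ entries of the recency order restricted to non-excluded chunks. I will prove this by induction over references, with three cases.
\begin{enumerate}
\item \emph{A hit.} The chunk moves to the top of both orders, and the set of excluded chunks is unchanged.
\item \emph{An admitted miss.} The chunk is inserted at the top, and the least recent resident is evicted. This is exactly the entry that falls out of the top $K_H$ of the restricted order.
\item \emph{A declined miss.} The chunk was not resident, because declines happen only at misses. It becomes excluded, and the restricted order and the resident set are untouched.
\end{enumerate}

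Next comes the depth comparison. For a non-excluded chunk $k$, its depth in the restricted order is the number of distinct non-excluded chunks referenced since $k$'s last reference. This is at most the number of all distinct chunks referenced in that interval, which is $k$'s depth in $S_F$.

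Now take a reference to $k$ that hits under full admission, so $D(k)<K_H$. At $k$'s previous reference, the next reference to $k$ was at distance $D(k)<K_H$. The hypothesis of Proposition~\ref{prop:admission} forbids declining there, so either that reference was a hit or its insertion was admitted, and $k$ is not excluded. Its restricted depth just before the current reference is at most $D(k)<K_H$, so by the structural claim $k$ is resident in the excluding tier and the reference hits.

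Since every full-admission hit remains a hit, summing over the references of any chunk gives weakly more hits per chunk, and summing over all chunks gives weakly more hits in total.

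I expect the main obstacle to be the structural claim. Specifically, I need to verify that bypassed chunks never displace anything and never occupy a slot, so that residency is determined purely by restricted depth. Two subcases need care. The first is a chunk that was declined and is later referenced again: that reference is a miss under both policies, since $D\geq K_H$, and from then on the chunk is handled as a fresh entry. The second is a chunk that is never referenced again: it simply stays excluded forever. I will state the induction invariant explicitly to cover both.
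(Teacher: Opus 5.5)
Your proof is correct. Its final step is the same as the paper's: a full-admission hit at $t$ gives $D<K_H$ at the previous reference $s$, so $s$ is not declined and promotes the chunk, and the chunk's competition under exclusion is then bounded by $D$.

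Where you differ is the residency lemma.

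\begin{itemize}
\item The paper proves a local statement (Lemma~\ref{lem:lru}): if $s$ promotes $y$ and $t$ is the next reference to $y$, then $y$ is resident at $t$ iff $P(s,t)<K_H$. Here $P(s,t)$ counts the distinct other chunks promoted strictly between $s$ and $t$. The induction tracks only the set of chunks above $y$ during one interval and never needs to know which chunks are currently excluded. The comparison $P'(s,t)\le D(s,t)$ is immediate because every promoted chunk is referenced.
\item You prove a global invariant: the bypass tier equals the top $K_H$ of the recency stack restricted to non-excluded chunks. Equivalently, the tier runs plain LRU on the subsequence of promoting references. This describes the whole tier at every step, at the cost of more bookkeeping.
\end{itemize}

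Note that your restricted depth is not $P'(s,t)$. It also drops chunks that were promoted after $s$ and then declined before $t$. The two counts always fall on the same side of $K_H$, but that agreement is exactly what your invariant supplies, so the invariant carries the weight of your argument.

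Two repairs are needed in your induction.

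\begin{itemize}
\item In case~3, the restricted order is not untouched when the declined chunk was previously non-excluded (promoted, then evicted): that chunk leaves the order. What saves you is that it was non-resident, so by the invariant it sat below position $K_H$. Removing it therefore leaves the top $K_H$ unchanged.
\item State the invariant for the order, not only the contents: the tier's LRU order coincides with the first $K_H$ entries of the restricted order. Case~2 uses that the least-recent resident is exactly the $K_H$-th restricted entry. This holds because, for a non-excluded chunk, its last reference is its last promotion.
\end{itemize}
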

A hit requires that fewer than $K_H$ distinct chunks were promoted, by a hit or an insertion, since the chunk's previous reference, and declined insertions never raise this count above $D$ (Appendix~\ref{app:runtime}). The gain comes from references whose count falls below $K_H$ once declined refills leave the stream: the working sets of the resident subset fit. Because a request's host coverage is its run of hits from the first chunk, coverage grows with the hits. Proposition~\ref{prop:admission} thus formalizes, for exact-prefix KV tiers with consecutive-prefix coverage, the principle behind cache bypassing, dead-block prediction, and re-reference interval prediction in processor caches \citep{johnson1999bypass,lai2001deadblock,khan2010sampling,jaleel2010rrip}: a block whose next reference lies beyond the cache's reach gains nothing from insertion. The runtime rule targets this set with two observable signals: the working-set estimate, which places the tier on either side of the condition, and whether the tier is full and evicting. Section~\ref{sec:policy_results} measures the effect on restores in both directions.

\section{Evaluation}
\label{sec:eval}

\subsection{Experimental setup}
\label{sec:setup}

We run OpenHands, an open-source coding-agent platform \citep{wang2024openhands}, with its CodeActAgent and \qthree{} in BF16 on vLLM~0.13.0 with prefix caching and LMCache~0.3.12, on eight H20 GPUs with TP8; the engine runs at most 16 requests at once, and the host tier stores 1,024-token chunks. Each deployment (8$\times$H20, 8$\times$RTX~3090, 2$\times$H800, and 8$\times$H800) sets TP and the share of GPU memory the engine may use so that serving is memory-constrained (Appendix Table~\ref{tab:config}); the RTX~3090 and 2$\times$H800 deployments also serve the dense Qwen2.5-Coder-32B-Instruct. On H20, the GPU KV cache holds 343K tokens in all host-capacity and policy comparisons. We vary the active pool between $A=8$ and 16 and the host capacity between 3 and 80\,GiB per rank. The live SWE-bench Verified runs, in which the agent acts on its own generations, use the same stack with $A=16$.

\paragraph{Dependency-preserving replay.}
To compare policies on identical work, we replay the SWE-bench Verified trajectories of the live H20 run without a host tier: 4,427 model calls with 147.1M prompt tokens. Model calls average 56 per task with a mean input of 33K tokens and a maximum of 238K; each task submits 1.86M prompt tokens, 80 times its 23K output tokens. Each replayed call submits its recorded prompt and reproduces every recorded output token exactly; a task's next call follows once its previous call finishes and the recorded agent and tool time elapses. Replay thus fixes each call's tokens while serving decisions change queueing and task interleaving. Runs start with a fresh server and empty cache; makespan includes all recorded agent and tool time (Appendix~\ref{app:replay}).

\paragraph{Measurements.}
\emph{Offload} denotes the LMCache host tier without write admission; it is the reference at every host budget. Figures plot every run, including independent repeats of recomputation and of 40\,GiB offload; the two recompute runs reproduce computed prefill to 0.2\%. Where two policies make the same decisions, their runs reproduce each other: at 40\,GiB, conditioned admission never engages its filter and reproduces offload (5.28M computed prefill tokens and 124.3\,min; offload 5.27M and 5.29M, 124.0 and 123.7\,min); at 5\,GiB, it engages the filter for 90\% of requests and reproduces fixed admission to 1.6\% in computed prefill and 0.4\% in makespan.

\subsection{Host capacity determines useful recovery}
\label{sec:capacity_results}

Host capacity, which the working-set model sizes, decides whether offload recovers computation (Figure~\ref{fig:capacity}). From 5 to 10 and 20\,GiB per rank, computed prefill falls from 76.7M to 23.5M and 5.3M tokens (93.1\%), and makespan follows, from 209 to 161 and 128\,min (38.7\%). The 5\,GiB tier leaves makespan unchanged from recomputation (212\,min). Across these points $\gamma_H$ falls from 2.3 through 1.1 to 0.57: the transition occurs where the host budget crosses the reuse working set.

At 5\,GiB, offload writes 6.5 tokens to the host for each token it restores (81.7M stored, 12.6M restored per rank). From 5 to 20\,GiB, host writes fall to 4.1M tokens per rank, restores rise to 82.8M, and preemptions fall from 1,808 to 46. A larger tier retains usable prefixes and avoids reconstructing and rewriting lost state; below the transition, write admission targets this write volume.

Beyond it, computed prefill levels off at 5.3--5.9M tokens from 20 to 80\,GiB. The GPU tier shows the same transition on RTX~3090 (Figure~\ref{fig:sweep}): raising the engine's share of GPU memory from 0.65 to 0.95 cuts wall-clock from 407 to 73\,min as the prefix-hit rate rises from 44.6\% to 98.1\%.

\subsection{Concurrency changes the working set}
\label{sec:concurrency}

Halving the active pool shrinks the working-set scale from 11.4 to 5.3\,GiB per rank, and the transition moves with it. Under recomputation, it cuts computed prefill from 89.7M to 41.5M tokens and makespan from 212 to 159\,min (24.8\%), with GPU memory and the engine's request cap unchanged. With $A=8$, tiers below 5\,GiB already recover prefixes: a 3\,GiB tier ($\gamma_H\approx1.8$) cuts computed prefill to 29.0M tokens and a 4.5\,GiB tier ($\gamma_H\approx1.2$) to 17.1M, with makespans of 151 and 147\,min. The active pool sets both the demand for host storage and its benefit.

\subsection{Stack-distance predictions locate the capacity transition}
\label{sec:prediction}

We recorded stack-distance predictions before running 10, 20, and 80\,GiB per rank at $A=16$ (Table~\ref{tab:prediction}) and 0, 3, and 4.5\,GiB at $A=8$ (Table~\ref{tab:prediction-n8}). The model places the capacity transition between 5 and 20\,GiB and predicts that capacity stops paying beyond it: it predicted 5.19--5.78M tokens of computed prefill at 20\,GiB and 5.19--5.31M at 80\,GiB, and the runs computed 5.28M and 5.29M. At 10\,GiB ($\gamma_H\approx1.1$), inside the transition region, the run recovers more than predicted, restoring 65.2M tokens per rank (predicted 55.3--58.9M). The same estimate that locates the transition gates admission (Section~\ref{sec:method}).

\subsection{Write admission reverses across the capacity transition}
\label{sec:policy_results}

\begin{figure}[t]
 \centering
 \includegraphics[width=\linewidth]{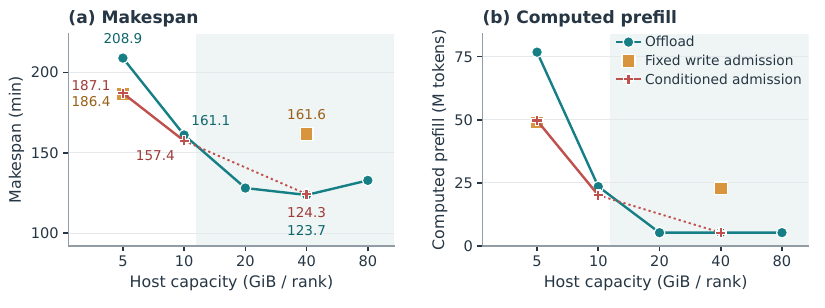}
 \caption{\textbf{Capacity-conditioned admission follows the faster policy on both sides of the capacity transition.} Replay on H20; offload has no write admission. (a)~Makespan (min), labeled at 5, 10, and 40\,GiB per rank; (b)~computed prefill. Shading: capacities above the 11.4\,GiB working-set estimate.}
 \label{fig:policy}
\end{figure}

\begin{table}[t]
 \centering\small
 \caption{\textbf{Offload pays on low-ratio GPUs once host hits survive and slows every H800 run.} Peak dense BF16 TFLOPS, host link per direction, and memory per GPU; offload/recompute for the MoE and the dense model. Bold: offload faster than recomputation.}
 \label{tab:hardware}
 \setlength{\tabcolsep}{4.5pt}
 \begin{tabular}{@{}lrlrrl@{}}
 \toprule
 GPU & BF16 TFLOPS & Host link (GB/s) & Memory (GB) & FLOP/byte & Offload/recompute \\
 \midrule
 RTX~3090 & 71 & PCIe Gen4, 32 & 24 & 2.2K & \textbf{0.91}; dense \textbf{0.92} \\
 H20 & 148 & PCIe Gen5, 64 & 96 & 2.3K & \textbf{0.60} (tier $\geq$ working set) \\
 H800 & 989.5 & PCIe Gen5, 64 & 80 & 15.5K & 1.08--1.87; dense 1.23 \\
 \bottomrule
 \end{tabular}
\end{table}

At 5\,GiB per rank, fixed write admission cuts computed prefill by 36\%, from 76.7M to 48.9M tokens, and makespan by 10.8\%, from 209 to 186\,min. Host writes fall from 81.7M to 2.6M tokens per rank: the request filter skips the writes of 1,275 requests (52,131 chunks per rank), and deduplication removes 44 chunk copies per rank.

At 40\,GiB the effect reverses. Fixed write admission raises computed prefill 4.3-fold, from 5.3M to 22.8M tokens, and lengthens makespan by 30.6\%, from 124 to 162\,min. It skips the writes of 290 requests that the larger tier could retain, and the lost recovery outweighs the reduced write volume.

Capacity-conditioned admission keeps the 5\,GiB gain and avoids the 40\,GiB loss (Figure~\ref{fig:policy}). At 5\,GiB, it engages the filter for 90.2\% of requests and skips the writes of 1,270; computed prefill falls by 35\%, from 76.7M to 49.7M tokens, and makespan by 10.4\%, from 208.9 to 187.1\,min. At 40\,GiB, the working-set estimate stays below the tier for every request, including 2,369 with a full, evicting tier, so the pressure signal never engages and every request's new KV is stored: relative to fixed admission, computed prefill falls 4.3-fold, from 22.8M to 5.3M tokens, and makespan by 23.1\%, from 161.6 to 124.3\,min.
At 10\,GiB, where the working-set estimate straddles the tier ($\gamma_H\approx1.1$), the policy engages the filter for the 39\% of requests whose estimate exceeds the tier: computed prefill falls by 14\%, from 23.5M to 20.1M tokens, and host writes by 73\%, from 22.4M to 6.0M tokens per rank.
Host restoration measures both sides of the condition of Proposition~\ref{prop:admission}. Where the working-set estimate signals pressure, declined writes cost no restores: at 5\,GiB, restored tokens per rank rise 3.3-fold, from 12.6M to 41.2M, with fixed admission and to 40.4M with conditioned admission, and at 10\,GiB from 65.2M to 66.9M. Where writes are declined without pressure, at 40\,GiB, restores fall from 81.9M to 65.2M tokens per rank and computed prefill rises 4.3-fold, as the condition predicts for declined chunks that return within capacity. The recorded reference streams confirm both sides: at 5\,GiB, 84.4\% (fixed) and 85.5\% (conditioned) of the declined chunks meet the condition of Proposition~\ref{prop:admission}, next reuse distance $D\geq K_H$ or no later reference; at 40\,GiB, where the tier holds the working set, 4.8\% of the chunks fixed admission declines do (Appendix~\ref{app:runtime}, Figure~\ref{fig:declined}).

\subsection{Makespan follows computed prefill across interventions}
\label{sec:allruns}

\begin{figure}[t]
 \centering
 \includegraphics[width=\linewidth]{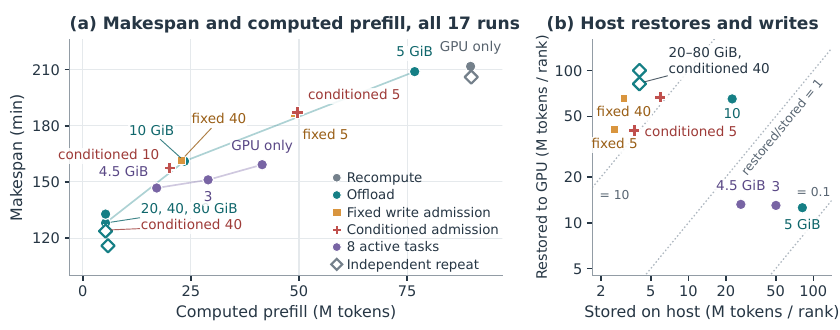}
 \caption{\textbf{With 16 active tasks, makespan follows computed prefill across capacity and write-admission runs.} Each marker is one complete run of the same replay workload; labels give host GiB per rank; diamonds mark independent repeats. (a)~Faint lines join runs without write admission in order of host capacity. (b)~Runs with a host tier; dotted lines mark restored/stored ratios of 0.1, 1, and 10.}
 \label{fig:allruns}
\end{figure}

With 16 active tasks, makespan follows computed prefill whichever intervention produced it (Figure~\ref{fig:allruns}): fixed write admission at 40\,GiB computes 22.8M tokens in 161.6\,min, and offload at 10\,GiB computes 23.5M in 161.1\,min. All 17 runs replay identical token work. The three runs with eight active tasks form a lower series. In host traffic (Figure~\ref{fig:allruns}b), offload at 5\,GiB restores 0.15 tokens per stored token, fixed write admission at 5\,GiB 16, and runs whose tier holds the working set (20--80\,GiB) 20--25.

\subsection{Hardware ratio and capacity set offload outcomes across deployments}
\label{sec:deployments}

Two factors set offload outcomes across deployments, as Section~\ref{sec:model} predicts: peak compute per host-link byte and host capacity relative to the working set (Table~\ref{tab:hardware}; Appendix~\ref{app:hardware}; deployment guide in Appendix~\ref{app:guide}). RTX~3090 and H20 are the low-ratio designs, with 2.2K and 2.3K peak FLOP per host-link byte, and offload pays on both once hits survive. It shortens the RTX~3090 runs of the MoE model (0.91) and the dense model (0.92); on H20, a tier that covers the 11.4\,GiB working set brings replay makespan to 0.60 of recomputation, and a tier below it leaves makespan unchanged (Sections~\ref{sec:intro} and~\ref{sec:capacity_results}). On H800, with 6.7--7.0 times more peak compute per host-link byte, offload lengthens every run we measured (1.08--1.87): two- and eight-GPU deployments, all three GPU memory settings, both models, and both a benchmark subset and the full benchmark. Host hits show the same split: at GPU KV usage of 80\% and above, offload lifts the median total prefix-hit rate on RTX~3090 from 33.8--34.7\% to 43.7--44.3\%; on 2$\times$H800, the rates with and without offload overlap (46.6--51.7\% and 46.9--48.8\%; Table~\ref{tab:hit-bands}).

\section{Related Work}
\label{sec:related}

\paragraph{KV reuse and tiered storage.}
PagedAttention and RadixAttention introduced paged KV allocation and automatic prefix reuse \citep{kwon2023vllm,zheng2023sglang}. FlexGen and InfiniGen offload to host memory \citep{sheng2023flexgen,lee2024infinigen}, CachedAttention and Pensieve keep multi-turn state in multi-tier KV caches \citep{gao2024cachedattention,yu2025pensieve}, HCache restores state from saved activations \citep{gao2025hcache}, and LMCache, CacheGen, and Mooncake provide KV storage and sharing, compressed streaming, and disaggregated serving \citep{liu2025lmcache,liu2024cachegen,peng2024mooncake}. \method{} runs on LMCache and sizes its host tier by the agent pool's working set.

\paragraph{Agent- and multi-turn-aware serving.}
InferCept cuts recomputation and memory waste when augmented LLMs pause for external interactions \citep{abhyankar2024infercept}. Continuum pins KV across tool calls with a time-to-live set by reload cost and queueing delay \citep{continuum2025}; MORI places programs across HBM and CPU memory by idleness, with per-tier admission control \citep{xia2026mori}; KVFlow guides eviction and prefetching by an agent step graph \citep{pan2025kvflow}; TOPAS jointly selects retained prefixes and scheduled requests \citep{topas2026}; Agentix schedules calls by program-level progress, and Preble balances prompt reuse and load \citep{luo2026agentix,srivatsa2025preble}. These systems decide what stays resident, where, or when requests run; \method{} controls whether the host tier holds the pool's reuse working set and which new state enters it, and composes with each.

\paragraph{Cache admission and working-set analysis.}
TinyLFU admits items by approximate access frequency \citep{einziger2017tinylfu}, AdaptSize tunes a size-dependent admission probability to changing request patterns \citep{berger2017adaptsize}, and Marconi admits and evicts hybrid-model prefix states by forecast reuse and compute saved per byte \citep{pan2025marconi}; these policies adapt admission to popularity and size to raise the hit ratio. Agent KV recurs in the task's next call, so a refill's value depends on whether the tier can hold the pool's working set until that call, and this is the condition our admission tests; a production study of KV-cache reuse motivates workload-aware eviction \citep{wang2025kvcachewild}. \method{} builds on the working-set and stack-distance theory of multiprogrammed memory \citep{denning1968,denning1980,mattson1970}, whose miss-ratio curves SHARDS approximates by sampling \citep{waldspurger2015shards}, and carries it to agent KV with new quantities: queue-stretched reuse distances, consecutive-prefix coverage and GPU-resident subtraction, which set the computation a host hit avoids, and the $(A-1)\bar N\beta_{\mathrm{rank}}$ scale that gates admission.

\paragraph{Context and representation changes.}
Context folding collapses completed sub-trajectories into summaries \citep{sun2025contextfolding}, SWE-Pruner prunes agent context guided by an agent-stated goal \citep{wang2026swepruner}, and KV quantization changes the stored representation \citep{liu2024kivi}. An early edit changes the key of every later chunk (Section~\ref{sec:background}), and LMCache's deployment study reports that context truncation can halve the prefix hit ratio \citep{liu2025lmcache}; the serving benefit of these methods also depends on the cache-stable length they preserve.

\paragraph{Implications for agent design and training.}
\label{sec:implications}
The working set gives agent builders a quantity they can plan with. Context-management strategies, such as folding, pruning, and summarization, are best judged by the cache-stable prompt length they preserve alongside the tokens they remove, since an early rewrite rekeys every later chunk (Section~\ref{sec:background}). The pool size and context length of a rollout or evaluation pool set the KV memory the workload needs through $(A-1)\bar N\beta_{\mathrm{rank}}$ before deployment: at 24\,KiB per token per GPU, 64 agents with 64K-token contexts need 94.5\,GiB per GPU, and 128 agents with 128K-token contexts 381\,GiB. Dependency-preserving replay turns recorded trajectories into a fixed workload on which any serving policy can be compared token for token. Together, these tools let training and evaluation pipelines size their KV tiers from the agents they run.

\section{Conclusion}

Hardware sets the value of a recovered prefix; the concurrent working set decides whether it survives until reuse. \method{} sizes and admits KV state by this working set: a stack-distance model sizes the host tier and locates the capacity transition before deployment, and admission scheduling applies load control to host writes. Across the transition, fixed write admission reverses sign, while capacity-conditioned admission keeps its gain and avoids its loss; across GPUs, offload pays where compute per host-link byte is low and the host tier covers the working set. As agents run longer in larger pools, the reuse working set becomes a first-class input for KV tiering and scheduling.

\clearpage
\section*{AI Use Statement}
AI tools assisted with manuscript editing, implementation, analysis scripts, and figure preparation. The authors reviewed the text and artifacts and take responsibility for the work.
\section*{Ethics Statement}
The experiments use existing software-maintenance benchmarks and do not release new models. More efficient coding agents can lower the cost of software development, while generated patches still require testing and security review. We report runtime and computation rather than measured energy consumption.
\section*{Reproducibility Statement}
Appendix~\ref{app:protocol} specifies the model, software, hardware, and replay protocol, and Appendix~\ref{app:replay} lists every completed run. The write-admission connector, the dependency-preserving replay framework, and the analysis tools, including the stack-distance capacity model, are available at \url{https://github.com/KunmingSHAO/efficientagent_release}.
\bibliography{references}
\bibliographystyle{iclr2027_conference}
\clearpage
\appendix
\input{appendix}
\end{document}

%% file: appendix.tex
\input{tables/measurements}
\makeatletter\setlength{\@fptop}{0pt}\makeatother %

\section{Experimental Configuration}
\label{app:protocol}

\paragraph{Agent execution.}
The H20 experiments use OpenHands v0.56.0 with CodeActAgent and \qthree{} in BF16 on SWE-bench Verified \citep{jimenez2024swebench}. Both live runs use an active pool of $A=16$ tasks (16 OpenHands workers), at most 100 agent iterations, temperature zero, top-$p=1$, a 245,760-token input limit, and a 16,384-token output limit. We use no history condenser (OpenHands' NoOp condenser), the default SWE instruction template, and benchmark hints. Each task has a separate project sandbox with two CPU cores and an 8\,GiB memory limit. Workflow time spans the first task dispatch through completion of agent inference; patch grading runs separately. Both H20 live runs use the same SWE-bench Verified instance list.

\paragraph{Serving stack and memory budgets.}
The H20 runs use vLLM~0.13.0 and LMCache~0.3.12 on eight GPUs, with the configuration in Table~\ref{tab:config}. vLLM sizes its GPU KV cache from the memory fraction listed there, the share of GPU memory the engine may use for weights, activations, and KV cache (its \texttt{gpu\_\allowbreak memory\_\allowbreak utilization} setting); on H20, the GPU KV cache stays fixed across the replay experiments. Host budgets are per TP rank: 5 and 40\,GiB per rank correspond to 40 and 320\,GiB across eight ranks. The lower part of Table~\ref{tab:config} gives the memory configuration of every deployment in Appendix~\ref{app:hardware}.

\begin{table}[ht]
 \centering\small
 \caption{Serving configuration. Upper part: the H20 stack; the active pool limits the tasks in progress, and the running-request cap limits the requests the engine executes at once. Lower part: memory configuration per deployment of \qthree{} and of the dense Qwen2.5-Coder-32B-Instruct; GPU KV capacity is in tokens, each GPU holding its shard of every token.}
 \label{tab:config}
 \begin{tabularx}{\linewidth}{@{}p{0.31\linewidth}Y@{}}
 \toprule
 Setting & Value \\
 \midrule
 Model / precision & Qwen3-Coder-30B-A3B-Instruct / BF16 \\
 Maximum model context & 262,144 tokens \\
 Running-request cap & 16 (\texttt{max\_num\_seqs}) \\
 Scheduler token budget & 8,192; chunked prefill enabled \\
 GPU prefix caching & Enabled \\
 Active pool $A$ & 16; 8 in the concurrency comparison \\
 Host KV chunk / hash & 1,024 tokens / \texttt{sha256\_cbor} \\
 KV bytes per chunk per rank & 24\,MiB \\
 Host capacity per rank & 3, 4.5, 5, 10, 20, 40, or 80\,GiB \\
 \midrule
 Deployment & Memory configuration \\
 \midrule
 $8\!\times$H20 & TP8; GPU KV capacity 343,408 tokens (7.86\,GiB per GPU) \\
 $8\!\times$RTX~3090 & TP8; memory fraction 0.65 (sweep 0.65--0.95); GPU KV capacity 345,760 tokens \\
 $2\!\times$H800, subset & TP2; memory fraction 0.65; GPU KV capacity 480,640 tokens \\
 $2\!\times$H800, full benchmark & Memory fraction 0.65 \\
 $8\!\times$H800 & Memory fraction 0.20 \\
 $8\!\times$RTX~3090, dense model & TP8; memory fraction 0.70 \\
 $2\!\times$H800, dense model & TP2; memory fraction 0.70 \\
 \bottomrule
 \end{tabularx}
\end{table}

\paragraph{Timing boundary.}
For one task, elapsed time consists of model-response intervals and the intervals between them. A model-response interval includes frontend processing, queueing, prefill, and decoding. An inter-call interval includes agent computation, tool execution, transport, and any retries in that interval. Across tasks, these activities overlap; makespan is the completion time of the last task. Consequently, summing engine times over requests or ranks does not yield makespan.

\section{Replay Construction and Complete Measurements}
\label{app:replay}

\paragraph{Workload construction.}
The fixed workload comprises the SWE-bench Verified trajectories of the live H20 run without a host tier (recomputation). It contains 4,427 successful recorded model calls, 147,124,984 prompt tokens, and 1,835,698 output tokens. Tasks enter the active pool in the recorded order. A task's next call becomes eligible after its preceding response finishes and its recorded inter-call interval elapses. The source trajectories also contain 115 calls without a usable successful completion; their elapsed time remains in the intervals between retained requests. All recorded intervals, including long waits, remain in the reported makespan.

Replay submits the exact prompt token IDs and constrains decoding, through a vLLM logits processor, to emit the recorded output tokens. Each completed run returns all 4,427 recorded sequences, every output token as recorded, without HTTP or prompt-length errors. This fixes token work and prefix identity while retaining real model execution. It measures serving performance on fixed trajectories; free-generation task quality is assessed separately in live execution. Each replay starts a fresh serving process and a cold cache. Runs with $A=8$ use the same tasks and token sequences as runs with $A=16$, with the running-request cap held at 16.

\paragraph{Repeated-prefix profile.}
Per task, the workload averages 56.0 calls (median 51, range 32--100), 1.86M prompt tokens, and 23.2K output tokens; the longest prompt has 238,105 tokens. For each call, we compute on token IDs the longest common prefix of its prompt with the same task's previous prompt, and with that previous prompt followed by its recorded output. Summed over the 4,427 calls, the first covers 142,541,707 prompt tokens (96.9\%); the second, the cache-stable prompt length of Section~\ref{sec:background}, covers 144,289,835 (98.1\%). A task's first call counts as unshared.

\paragraph{Baselines and policy variants.}
Recompute uses vLLM's GPU prefix caching and recomputes state missing from the GPU. Offload adds the LMCache host tier without write admission. Offload runs use unmodified LMCache, except one of the two independent 40\,GiB repeats, which runs through our admission runtime with both admission rules disabled (marked $^\ast$ in Table~\ref{tab:evidence-all-replay}). Fixed write admission applies the request filter of Equation~\ref{eq:save} to every request together with copy-time deduplication. Capacity-conditioned admission keeps deduplication and applies the filter under the pressure rule of Equation~\ref{eq:save}.

\paragraph{Counters and repeated runs.}
Computed prefill is the interval difference in vLLM's \texttt{request\_\allowbreak prefill\_\allowbreak kv\_\allowbreak computed\_\allowbreak tokens\_\allowbreak sum}; preemptions use \texttt{num\_\allowbreak preemptions\_\allowbreak total}. Reported host read/write volumes sum LMCache transfer-event token counts across workers and divide by eight. Mean queue delay is the corresponding queue-time sum divided by its request count. These are workload-level measurements; transfer-event counts include repeated movements of the same token. We plot each run individually; the unmarked 40\,GiB repeat additionally records host-tier telemetry (occupancy and evictions).

\EvidenceAllReplayTable

\paragraph{Capacity and concurrency.}
The 5-to-20\,GiB host-capacity change reduces computed prefill by 93.1\%, writes by 95.0\%, and makespan by 38.7\%. GPU KV capacity, active pool, and token work are fixed. Reducing the active pool from 16 to 8 tasks in the recompute configuration lowers computed prefill by 53.7\% and makespan by 24.8\% (212 to 159\,min). The concurrency experiment changes the competing history volume at a fixed running-request cap.

\section{Offload Cost and Prefix-Survival Model}
\label{app:model}

\paragraph{Hardware cost.}
OBR compares the service cost of restoring a missing prefix with recomputing that prefix at the same operating point. For restored length $n$, define
\begin{equation}
 t_{\mathrm{load}}(n)
 = \frac{T_{\mathrm{load}}(n)}{n}
 = \frac{\beta_{\mathrm{rank}}}{B_{\mathrm{H2D}}^{\mathrm{eff}}}
   + \frac{\tau_{\mathrm{load}}}{n}.
\end{equation}
As $n$ grows, the fixed overhead is amortized and $\mathrm{OBR}(n)$ approaches $1-\beta_{\mathrm{rank}}/(B_{\mathrm{H2D}}^{\mathrm{eff}}t_{\mathrm{pf}})$. Effective H2D bandwidth refers to the restore path; host writes use the opposite direction. Effective prefill cost depends on context length, batching, and model execution. Peak arithmetic throughput and host-link bandwidth bound the two costs; Section~\ref{sec:deployments} compares GPU designs by their ratio.

For H20, the restore cost comes from the LMCache retrieval records of the live offload run: 71,872 per-rank retrieval events with a median rate of 26.6\,GiB/s (27.1\,GiB/s as total bytes over total event time). At $\beta_{\mathrm{rank}}=24$\,KiB, the median corresponds to 0.86\,$\mu$s per token per rank. For prefill, a compute bound suffices. \qthree{} activates 3.3B parameters per token \citep{qwen3CoderCard}, so its linear layers alone execute 6.6\,GFLOP per prompt token; attention adds work that grows with context. At the nominal dense BF16 throughput of 148\,TFLOPS per H20, eight GPUs need at least 5.6\,$\mu$s per token. With the median restore cost, the large-$n$ limit of OBR is therefore at least $1-0.86/5.6>0.84$; attention and parallel-execution overheads make real prefill slower than this bound, which only raises OBR.

For $P$ submitted prompt tokens, let $r_g^0$ and $r_g^1$ denote exclusive GPU-hit fractions without and with offload, and $r_e$ the exclusive restored fraction. First-pass prefill is $P(1-r_g^0)$ and $P(1-r_g^1-r_e)$, respectively. If preemptions add $J_0$ and $J_1$ recomputed tokens, total computed prefill adds these quantities to the respective first-pass terms. At fixed effective costs, setting $\Delta r=r_g^1+r_e-r_g^0$ gives
\begin{equation}
 \Delta T_{\mathrm{serve}}
 \approx -[P\Delta r+J_0-J_1]\,t_{\mathrm{pf}}
          +Pr_e\,\beta_{\mathrm{rank}}/B_{\mathrm{H2D}}^{\mathrm{eff}}
          +\Delta T_{\mathrm{other}}.
\end{equation}
This is Equation~\ref{eq:net} under a common token denominator. The final term includes write costs, per-restoration setup, and changes in exposed scheduling and overlap delays. This accounting organizes service costs; completion time follows their placement along concurrent task paths.

\paragraph{Prefix coverage from a reference stream.}
We construct prefix-dependent chunk keys from the recorded token histories, so equal text after different preceding contexts has different keys. For each capacity, an LRU reference tracks stored chunks in request order. A host lookup stops at the first absent chunk: later isolated chunks do not extend the recoverable prefix. If $H_j(C)$ is this consecutive coverage for request $j$ and $G_j$ is its GPU-resident prefix, useful host restoration is
\begin{equation}
 R_j(C)=\max\{0,H_j(C)-G_j\},
\end{equation}
after chunk and block alignment. Summing the remaining uncovered prompt length estimates computed prefill. The model retains the prefix constraint, unlike a hit count that treats chunks as independent reusable objects.

We evaluate the model on the recorded orderings of the 5 and 40\,GiB reference runs; each prediction range spans these two interleavings. Table~\ref{tab:prediction} reports the predictions recorded before the 10, 20, and 80\,GiB runs together with their measurements.

\begin{table}[ht]
 \centering\small
 \caption{Prefix-survival predictions and subsequent measurements. Prefill and restored volumes are millions of tokens; restored volume is per rank. Prediction ranges use the two reference interleavings.}
 \label{tab:prediction}
 \setlength{\tabcolsep}{5pt}
 \begin{tabular}{@{}rrrrr@{}}
 \toprule
 Host GiB & Predicted prefill & Measured prefill & Predicted restore & Measured restore \\
 \midrule
 10 & 27.64--31.20 & 23.53 & 55.28--58.88 & 65.20 \\
 20 & 5.19--5.78 & 5.28 & 80.84--81.36 & 82.77 \\
 80 & 5.19--5.31 & 5.29 & 81.31--81.36 & 82.89 \\
 \bottomrule
 \end{tabular}
\end{table}

\paragraph{Changing the active pool.}
For $A=8$, the prediction additionally estimates GPU coverage from the two $A=16$ reference runs and simulates request ordering with eight active tasks; restoration serves only the coverage absent from the GPU. Table~\ref{tab:prediction-n8} lists the predictions, recorded before these runs, and the measurements.

\begin{table}[ht]
 \centering\small
 \caption{Predictions and measurements for an active pool of $A=8$. Units match Table~\ref{tab:prediction}.}
 \label{tab:prediction-n8}
 \setlength{\tabcolsep}{5pt}
 \begin{tabular}{@{}rrrrr@{}}
 \toprule
 Host GiB & Predicted prefill & Measured prefill & Predicted restore & Measured restore \\
 \midrule
 0 & 41.87--60.80 & 41.48 & 0 & 0 \\
 3 & 29.59--52.01 & 28.99 & 7.51--12.28 & 13.01 \\
 4.5 & 18.84--31.73 & 17.11 & 22.25--29.36 & 13.22 \\
 \bottomrule
 \end{tabular}
\end{table}

\paragraph{Working-set scale.}
Equation~\ref{eq:workingset} approximates a backlogged, weakly shared pool with comparable context lengths. For $A=16$, $(16-1)\times33{,}234\times24\,\mathrm{KiB}$ is 11.4\,GiB per rank; for $A=8$, $7\times33{,}234\times24\,\mathrm{KiB}$ is 5.3\,GiB. Here $\bar N=33{,}234$ is the mean prompt length of the 4,427 replayed requests; with the GPU KV capacity $K_G=343{,}408$ tokens, $\gamma_G=16\times33{,}234/343{,}408\approx1.55$ at $A=16$. This scale locates the capacity transition; the trace model accounts for the order and sizes of individual references. As tool delays, shared prefixes, and active concurrency change, the runtime updates the estimate from recent requests.

\paragraph{Tensor-parallel footprint.}
\qthree{} has 48 layers, four KV heads, and head dimension 128 \citep{qwen3CoderCard}. BF16 KV state occupies $2\times48\times4\times128\times2=98{,}304$ bytes per token before sharding. The dense Qwen2.5-Coder-32B-Instruct has 64 layers, eight KV heads, and head dimension 128 \citep{qwen25CoderCard}, or $2\times64\times8\times128\times2=262{,}144$ bytes per token. Table~\ref{tab:footprint} shows the effect of head replication, which the MoE model reaches at TP8. The GPU KV capacity that a memory fraction yields also depends on model weights and other engine allocations; Table~\ref{tab:config} lists the resulting capacities.

\begin{table}[ht]
 \centering\small
 \caption{BF16 KV footprint of the two evaluated models.}
 \label{tab:footprint}
 \setlength{\tabcolsep}{5pt}
 \begin{tabular}{@{}rrrrrrr@{}}
 \toprule
 & \multicolumn{3}{c}{\qthree{}} & \multicolumn{3}{c}{Qwen2.5-Coder-32B-Instruct} \\
 \cmidrule(lr){2-4}\cmidrule(l){5-7}
 TP ranks & Heads/rank & KiB/rank & Aggregate KiB & Heads/rank & KiB/rank & Aggregate KiB \\
 \midrule
 1 & 4 & 96 & 96 & 8 & 256 & 256 \\
 2 & 2 & 48 & 96 & 4 & 128 & 256 \\
 4 & 1 & 24 & 96 & 2 & 64 & 256 \\
 8 & 1 & 24 & 192 & 1 & 32 & 256 \\
 \bottomrule
 \end{tabular}
\end{table}

\section{Runtime Admission Details}
\label{app:runtime}

The controller runs inside LMCache's vLLM integration and decides at the request's first prefix lookup, which the vLLM scheduler issues when a scheduling step considers a waiting request for execution; write admission is thus decided within request scheduling. The controller keeps that store/skip choice for later lookups of the same request, including lookups after preemption. It changes only the storage decision: LMCache's lookup, restore, and token serialization are unchanged, and a skipped write leaves computation and decoding to the serving engine.

The LMCache worker of the first GPU publishes the tier's telemetry every second: the number of chunks registered in the tier, its chunk capacity $K_H$, and the number $e_t$ of chunks evicted in the last 60 seconds to make room for new ones. With occupancy $o_t$, the registered chunks over $K_H$, and occupancy threshold $\theta$, the tier is full and evicting when $o_t\geq\theta$ and $e_t>0$, so the pressure signal of Equation~\ref{eq:save} is
\begin{equation}
 p_t=\mathbf{1}\bigl[\widehat C_{\mathrm{reuse}}>C_H\bigr]\land\mathbf{1}\bigl[o_t\geq\theta\land e_t>0\bigr],
 \qquad
 \widehat C_{\mathrm{reuse}}=(A_t-1)\,\bar N_t\,\beta_{\mathrm{rank}},
 \label{eq:pressure-full}
\end{equation}
where $A_t$ counts the tasks with a first prefix lookup in the last 60 seconds and $\bar N_t$ is the mean prompt length of the last 256 first lookups. When no fresh report exists, as before the first publication or when a deployment does not publish telemetry, the first factor alone decides. Table~\ref{tab:controller} lists the parameters.

LMCache~0.3.12 evicts from its CPU tier only inside allocation: when a store cannot allocate a chunk buffer, the tier frees least-recently-used chunks that no pending lookup has pinned and no transfer still references, and $e_t$ counts exactly these evictions; explicit removals do not enter $e_t$. An allocation can fail while fewer chunks are registered than the tier holds, because a store allocates the buffers of all its chunks before it registers them, and buffers that transfers still reference stay allocated. The occupancy guard counts an eviction as pressure only when registered chunks fill at least $\theta$ of the tier, so evictions that capacity does not drive do not register as pressure.

\begin{table}[ht]
 \centering\small
 \caption{Write-admission parameters, fixed before the admission runs.}
 \label{tab:controller}
 \begin{tabular}{@{}ll@{}}
 \toprule
 Parameter & Value \\
 \midrule
 Cache occupancy threshold $\theta$ & 0.95 \\
 Task activity and eviction window & 60 seconds \\
 Prompt-length window & Last 256 first prefix lookups \\
 Telemetry report interval & 1 second \\
 Scheduler report-read interval & 0.5 seconds \\
 Maximum age of a fresh report & 5 seconds \\
 Write threshold $\kappa$ & 8 full chunks \\
 Host chunk size $b$ & 1,024 tokens \\
 Working-set estimate footprint & 24\,KiB/token/rank (\qthree{}, TP8) \\
 \bottomrule
 \end{tabular}
\end{table}

Deduplication operates per GPU worker before each copy. It queries the tier for the chunk keys LMCache selects for storage, groups the absent keys into contiguous runs, and passes those runs to LMCache's store routine with their original GPU KV locations. With fixed write admission at 5\,GiB, the request filter skips the writes of 1,275 requests (52,131 chunks per rank), and deduplication removes 44 chunk copies per rank (352 across the eight ranks); offload without write admission writes 79,780 chunks per rank at this budget. At 40\,GiB, the filter skips the writes of 290 requests (20,234 chunks per rank), and deduplication removes none. In the 5 and 40\,GiB runs with capacity-conditioned admission, every decision finds a fresh telemetry report. At 5\,GiB, the working-set estimate exceeds the tier for 3,994 of the 4,427 requests and telemetry reports a full, evicting tier for 4,136; pressure holds for the 3,992 requests with both, the filter skips the writes of 1,270, and deduplication removes 167 chunk copies per rank (1,336 across the eight ranks). At 40\,GiB, telemetry reports a full, evicting tier for 2,369 requests, the estimate stays below the tier for all 4,427, and no write is skipped.
At 10\,GiB, telemetry reports a full, evicting tier for 3,770 requests and the estimate exceeds the tier for 1,727; pressure holds for these 1,727, the filter skips the writes of 162, and deduplication removes 87 chunk copies per rank (696 across the eight ranks).

\paragraph{Parameter sensitivity.}
The controller makes each decision at the request's first prefix lookup from its token count $n$ and host match $h$, which the scheduler's lookup record logs for every request; the recorded decision streams therefore give $u$ exactly, and under fixed write admission, where every request is under pressure, they reproduce the filter counters above (1,275 requests and 52,131 chunks at 5\,GiB; 290 and 20,234 at 40\,GiB). New-chunk counts are bimodal (Figure~\ref{fig:sensitivity}a): at 5\,GiB, 2,814 of the 4,427 requests need at most one new full chunk, and the requests above $\kappa=8$ refill a median of 28 chunks. For $\kappa$ of 2, 4, 8, 16, and 32, the filter selects 1,468, 1,368, 1,275, 1,034, and 548 requests at 5\,GiB and declines 97.0, 96.4, 95.3, 89.6, and 68.8\% of the new-chunk writes (Figure~\ref{fig:sensitivity}b). The telemetry that the occupancy test reads is logged every 5 seconds in the three conditioned runs: all 3,049 logged reports with eviction activity (1,373, 1,059, and 617 at 5, 10, and 40\,GiB) show occupancy of at least 0.98, and 3,043 of them a full tier (Figure~\ref{fig:sensitivity}c). Any $\theta\leq0.98$ therefore yields the same pressure signal as $\theta=0.95$.

\begin{figure}[ht]
 \centering
 \includegraphics[width=\linewidth]{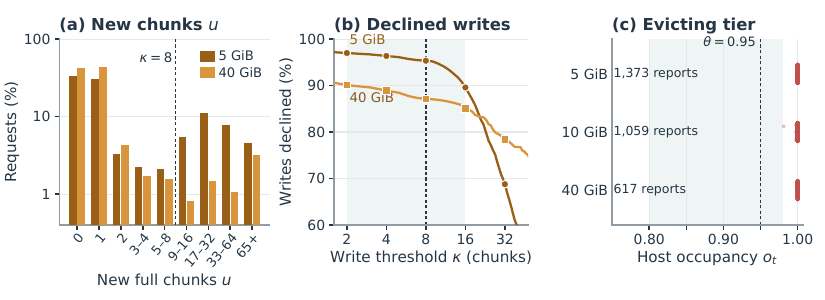}
 \caption{\textbf{The filter declines 89.6--97.0\% of new-chunk writes for $\kappa$ from 2 to 16, and every report from an evicting tier shows occupancy 0.98 or above.} (a)~New full chunks $u$ at the first prefix lookup under fixed write admission, where every request is under pressure; dashed: $\kappa=8$. (b)~Share of new-chunk writes the filter declines on the same decision streams; shading marks $\kappa$ from 2 to 16. (c)~Occupancy in every logged telemetry report with eviction activity in the conditioned runs; shading marks occupancies from 0.80 to 0.98, dashed: $\theta=0.95$.}
 \label{fig:sensitivity}
\end{figure}

\paragraph{Recency order of the host tier.}
LMCache's CPU backend orders resident chunks by recency under its default LRU policy: an insertion places a chunk at the most-recent end, a lookup hit moves the matched prefix chunks there, and an insertion that needs space evicts from the least-recent end. Proposition~\ref{prop:admission} is stated for this order.

\paragraph{Reference model.}
Let $x_1,\dots,x_T$ be a fixed stream of chunk references. The tier starts empty and holds at most $K_H$ chunks in recency order. If $x_i$ is resident, reference $i$ hits and moves $x_i$ to the most-recent position. Otherwise it misses: an admitted miss inserts $x_i$ at the most-recent position, first evicting the least-recent chunk if $K_H$ chunks are resident, and a declined miss leaves the tier unchanged. Full admission admits every miss; a restricted policy declines the misses at a set $X$ of references. Reference $i$ \emph{promotes} $x_i$ if it hits or is an admitted miss. For consecutive references $s<t$ to a chunk $y$, let $D(s,t)$ be the number of distinct chunks other than $y$ referenced strictly between $s$ and $t$, the reuse distance of Equation~\ref{eq:distance}, and let $P(s,t)$ be the number of distinct chunks other than $y$ promoted strictly between them. Every promoted chunk is referenced, so $P(s,t)\leq D(s,t)$, with equality under full admission.

\begin{lemma}
\label{lem:lru}
Under any admission policy, if reference $s$ promotes $y$ and $t$ is the next reference to $y$, then $y$ is resident at $t$ if and only if $P(s,t)<K_H$.
\end{lemma}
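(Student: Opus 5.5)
The plan is to track the rank of $y$ in the recency order from the moment of its promotion at $s$ until reference $t$, and to show that this rank equals one plus the number of distinct other chunks promoted in between. The lemma then follows because LRU evicts exactly when that rank would exceed $K_H$.

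\emph{Step 1 (the invariant).} Immediately after reference $s$, the chunk $y$ sits at the most-recent position, which is rank $1$, because both a hit and an admitted miss place it there. For each $i$ with $s<i<t$ we have $x_i\neq y$, since $t$ is the next reference to $y$. I claim by induction on $i$ that, as long as $y$ is still resident after step $i$, its rank equals $1+P_i$, where $P_i$ is the number of distinct chunks other than $y$ promoted in $(s,i]$.

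\emph{Step 2 (the four cases).} Reference $i$ falls into one of four cases.
\begin{itemize}
\item A declined miss leaves the tier unchanged. It also promotes nothing, so $P_i=P_{i-1}$ and the invariant is preserved.
\item A hit on a chunk $z$ that has already been promoted since $s$ lies strictly above $y$ in the order. Moving $z$ to the front does not change $y$'s rank, and $P_i=P_{i-1}$.
\item A hit on a chunk $z$ not promoted since $s$ must lie below $y$. This is because every chunk above $y$ was promoted after $s$: recency order is exactly the order of last promotion. Moving $z$ to the front raises $y$'s rank by one, and $P_i$ also rises by one.
\item An admitted miss inserts a chunk $z$ that is not resident. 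It is therefore not among the promoted-since-$s$ chunks still resident, and I will argue that all promoted-since-$s$ chunks remain resident while $y$ does, since they lie above $y$ and eviction takes only the least-recent chunk. The insertion pushes $y$ down by one and raises $P_i$ by one. If the tier was full, it also evicts the least-recent chunk. That chunk is $y$ exactly when $y$ was already at rank $K_H$, i.e.\ when $1+P_{i-1}=K_H$ and the new value would make $P_i= K_H$.
\end{itemize}

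\emph{Step 3 (concluding).} Because $P_i$ is nondecreasing and only the count can drive $y$ to the bottom, $y$ survives to $t$ iff its rank never needs to exceed $K_H$, i.e.\ iff $P(s,t)=P_{t-1}\le K_H-1$. One subtlety is the direction ``$P<K_H$ implies resident''. A hit on a chunk below $y$ can never evict $y$, and an insertion evicts only the least-recent chunk. So while $1+P_i\le K_H$, $y$ is not at the bottom of a full tier whenever an eviction happens. The converse direction is the eviction case above. The main obstacle is the bookkeeping that ``promoted since $s$'' coincides with ``above $y$ in recency order'' (needed in the hit cases). I would establish it as a separate invariant proved jointly in the same induction, using that recency order is exactly order of last promotion under this LRU model.
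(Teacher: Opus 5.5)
Your proposal is correct and follows essentially the same route as the paper: an induction showing that, while $y$ is resident, the chunks above it are exactly the distinct chunks promoted since $s$ (your rank $1+P_i$ is the paper's $1+|Z_i|$), so $y$ is evicted precisely by the insertion that raises this count to $K_H$. As the paper does, state explicitly in your concluding step that an evicted $y$ cannot be reinserted before $t$; this follows from your observation that $x_i\neq y$ for $s<i<t$.
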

\begin{proof}
For $s<i\leq t$, let $Z_i$ be the set of distinct chunks other than $y$ promoted strictly between $s$ and $i$. By induction over $i$, while $y$ is resident the chunks more recent than $y$ are exactly $Z_i$. Immediately after $s$, $y$ is the most recent chunk and $Z_{s+1}=\emptyset$. A declined miss changes neither the tier nor $Z$. A promotion of a chunk in $Z$ reorders only chunks above $y$. A promotion of a chunk $z\notin Z$ moves $z$, which was resident below $y$ or absent, above $y$, and $z$ joins $Z$; if $z$ was absent and the tier full, the least-recent chunk is evicted first. While $y$ is resident, no chunk above it is least recent, so $y$ is the only chunk of $Z\cup\{y\}$ that can be evicted, and $y$ is least recent in a full tier exactly when $|Z|=K_H-1$. In that state no resident chunk lies below $y$, so the next promotion of a chunk outside $Z$ is an insertion, which evicts $y$ and raises $|Z|$ to $K_H$; $y$ cannot return before $t$, its next reference. Hence $y$ is resident at $t$ exactly when $|Z_t|=P(s,t)<K_H$.
\end{proof}

\begin{proof}[Proof of Proposition~\ref{prop:admission}]
Let reference $t$ to chunk $y$ hit under full admission. A first reference misses under every policy, so $y$ has a previous reference $s$. Under full admission every reference promotes its chunk, and Lemma~\ref{lem:lru} gives $D(s,t)=P(s,t)<K_H$. The next reference to $y$ after $s$ is $t$, at distance below $K_H$, so $s\notin X$. Under the restricted policy, $s$ therefore hits or is an admitted miss and promotes $y$, and the promotions between $s$ and $t$ satisfy $P'(s,t)\leq D(s,t)<K_H$. By Lemma~\ref{lem:lru}, $y$ is resident at $t$, and $t$ hits. The hits under full admission are thus contained in the hits under the restricted policy, for every chunk and in total.
\end{proof}
A request's host coverage is the run of hits from its first chunk, so under the restricted policy every request's consecutive prefix coverage is at least its coverage under full admission.

Conversely, let $t$ hit under the restricted policy and miss under full admission. A declined miss at $s$ would leave $y$ absent until $t$, so $s$ promotes $y$, and Lemma~\ref{lem:lru} gives $P'(s,t)<K_H\leq D(s,t)$. The added hits are exactly the references whose previous reference promotes the chunk and whose distance, counted over promoted chunks, falls below capacity once the declined insertions leave the stream; this is the load-control gain of Section~\ref{sec:method}.

\paragraph{Relation to the backlogged-pool model (Equation~\ref{eq:workingset}).}
In this model, the contexts of the other $A-1$ agents are referenced between consecutive turns of one agent, so the chunks a request writes return at the working-set scale, and the condition $D\geq K_H$ of Proposition~\ref{prop:admission} holds when $\gamma_H>1$. A large $u$ marks a request whose context the tier has already lost: when a task's earlier chunks were evicted, at least $K_H$ distinct chunks were promoted between two of its turns (Lemma~\ref{lem:lru}). Requests with $u\leq\kappa$ find all but at most $\kappa$ chunks of their context in the tier, and saving them keeps the working sets of the resident subset current.

\paragraph{Reuse distance of declined chunks.}
We measure the condition of Proposition~\ref{prop:admission} on the writes the admission runs declined. In each run's reference stream, every request references the full chunks of its prompt at its first prefix lookup, in the recorded order, with prefix-dependent keys; the declined chunks are the new full chunks of every request whose write the filter skipped. Fixed-admission decisions follow from the logged lookups, and conditioned decisions from the logged lookups and telemetry reports. For each declined chunk, $D$ counts the distinct other chunks referenced before its next reference, relative to tiers of $K_H=213$, 426, and 1,706 chunks at 5, 10, and 40\,GiB per rank (Figure~\ref{fig:declined}). Under pressure at 5\,GiB, 84.4\% (fixed) and 85.5\% (conditioned) of the declined chunks meet the condition, $D\geq K_H$ or no later reference, and their median $D$ is 1.76 and 1.71 times $K_H$. At 10\,GiB, in the transition region, 63.6\% of the chunks conditioned admission declines meet it. At 40\,GiB, where the tier holds the working set, conditioned admission declines no write, and 4.8\% of the chunks fixed admission declines meet the condition, all of them never referenced again; the others return at a median $D$ of 0.25 times $K_H$.

\begin{figure}[ht]
 \centering
 \includegraphics[width=\linewidth]{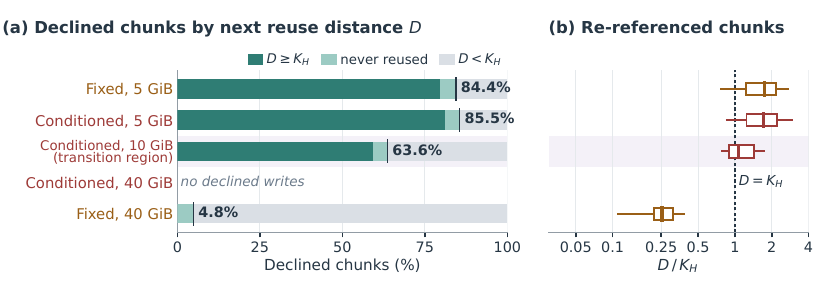}
 \caption{\textbf{Under pressure, 84.4--85.5\% of declined chunks return beyond the tier's capacity or never; at 40\,GiB, 4.8\% do.} Recorded reference streams of the admission runs at 5, 10, and 40\,GiB per rank. (a)~Declined chunks by next reuse distance $D$ relative to the tier's $K_H$ chunks; $D\geq K_H$ and no later reference together form the condition of Proposition~\ref{prop:admission}. (b)~$D/K_H$ of the declined chunks referenced again: whiskers span the 10th to 90th percentiles, boxes the quartiles, and ticks mark the median; dashed: $D=K_H$. Shading: the transition region.}
 \label{fig:declined}
\end{figure}

\FloatBarrier
\section{Hardware Deployment Results}
\label{app:hardware}

Table~\ref{tab:evidence-platform} combines completed H20 agent runs with the RTX~3090 and H800 experiments of \qthree{} and the dense Qwen2.5-Coder-32B-Instruct. Rows are grouped by timing scope (workflow wall-clock, engine window, and reported inference time), and each ratio compares offload and recomputation within one deployment and scope. Table~\ref{tab:hardware} relates these ratios to the GPU design.

\EvidencePlatformTable

Table~\ref{tab:config} lists the memory configuration of each deployment; the RTX~3090 pair uses an active pool of 16 tasks. Its sweep of the GPU memory fraction, which sets the GPU KV cache size, is shown in Figure~\ref{fig:sweep} and Table~\ref{tab:sweep}. Across the six sweep points, each percentage point of miss rate costs 2.2--2.6 minutes of wall-clock between memory fractions 0.95 and 0.80, and 6.2--17.3 minutes below 0.80.

\paragraph{Prefix caching.}
Table~\ref{tab:prefix-caching} reports \qthree{} runs with and without GPU prefix caching. Prefix caching shortens the RTX~3090 run 11.1-fold and the 2$\times$H800 run 2.19-fold.

\begin{table}[ht]
 \centering\small
 \caption{\textbf{Prefix caching shortens coding-agent runs by up to 11.1$\times$.} \qthree{}; workflow wall-clock.}
 \label{tab:prefix-caching}
 \begin{tabular}{@{}lrrr@{}}
 \toprule
 Hardware & Without prefix caching & With prefix caching & Speedup \\
 \midrule
 $8\!\times$RTX~3090 & 13\,h 35\,min & 1\,h 13\,min & \textbf{11.1}$\times$ \\
 $2\!\times$H800 & 1\,h 58\,min & 54\,min & \textbf{2.19}$\times$ \\
 \bottomrule
 \end{tabular}
\end{table} At a fixed model and precision, these deployments change both execution resources and the per-rank KV footprint.

\paragraph{Hit rates by GPU KV usage.}
Table~\ref{tab:hit-bands} reports prefix-hit medians for the \qthree{} subset pairs on RTX~3090 and two H800 GPUs at memory fraction 0.65, grouped by the GPU KV usage at which they were logged. Section~\ref{sec:deployments} compares the total medians in the three bands at 80\% usage and above.

\EvidenceHitBandTable

\begin{figure}[t]
 \centering
 \includegraphics[width=\linewidth]{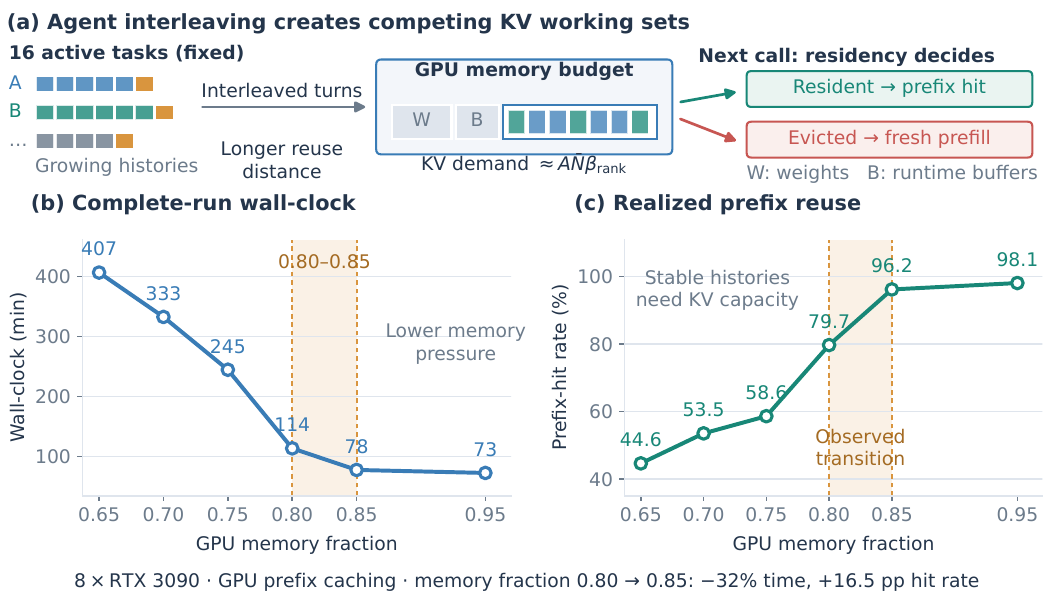}
 \caption{\textbf{The GPU tier shows the same capacity transition on RTX~3090.} The RTX~3090 sweep of the GPU memory fraction uses eight GPUs and an active pool of 16 tasks. Interleaved histories compete for resident KV state. Increasing the memory fraction from 0.80 to 0.85 lowers completion time from 114 to 78 minutes and raises the reported hit rate by 16.5 percentage points. Shading marks this measured interval.}
 \label{fig:sweep}
\end{figure}

\begin{table}[ht]
 \centering\small
 \caption{Complete RTX~3090 GPU memory-fraction sweep (\qthree{}, SWE-bench Verified subset).}
 \label{tab:sweep}
 \begin{tabular}{@{}rrr@{}}
 \toprule
 GPU memory fraction & Wall-clock (min) & Prefix hit rate \\
 \midrule
 .65 & 407 & 44.6\% \\
 .70 & 333 & 53.5\% \\
 .75 & 245 & 58.6\% \\
 .80 & 114 & 79.7\% \\
 .85 & \textbf{78} & \textbf{96.2\%} \\
 .95 & 73 & 98.1\% \\
 \bottomrule
 \end{tabular}
\end{table}

\FloatBarrier
\section{Deployment Guide}
\label{app:guide}

Table~\ref{tab:guide} turns the hardware ratio of Table~\ref{tab:hardware} and the pressure ratios $\gamma_G=A\bar N/K_G$ and $\gamma_H=\widehat C_{\mathrm{reuse}}/C_H$ of Section~\ref{sec:model}, all computable before deployment, into actions, with evidence from Section~\ref{sec:eval}.

\begin{table}[ht]
 \centering\small
 \caption{\textbf{Deployment guide.}}
 \label{tab:guide}
 \renewcommand{\arraystretch}{1.2}
 \begin{tabularx}{\linewidth}{@{}>{\raggedright\arraybackslash}p{0.27\linewidth}YY@{}}
 \toprule
 Regime & Evidence & Action \\
 \midrule
 High FLOP per host-link byte & Offload/recompute 1.08--1.87 (H800) & Favor GPU prefix caching \\
 Low FLOP per byte, $\gamma_G>1$ & 0.60 (H20, 20\,GiB); 0.91 (RTX~3090) & Offload; admit every write if $\gamma_H\leq1$ \\
 $\gamma_H>1$ & Makespan $-10.4\%$ at 5\,GiB & Conditioned admission or larger tier \\
 Context editing & Early edits rekey all later chunks & Judge edits by cache-stable length \\
 \bottomrule
 \end{tabularx}
\end{table}

\FloatBarrier
\section{Context Folding and Prefix Identity}
\label{app:folding}

\begin{figure}[ht]
 \centering
 \includegraphics[width=\linewidth]{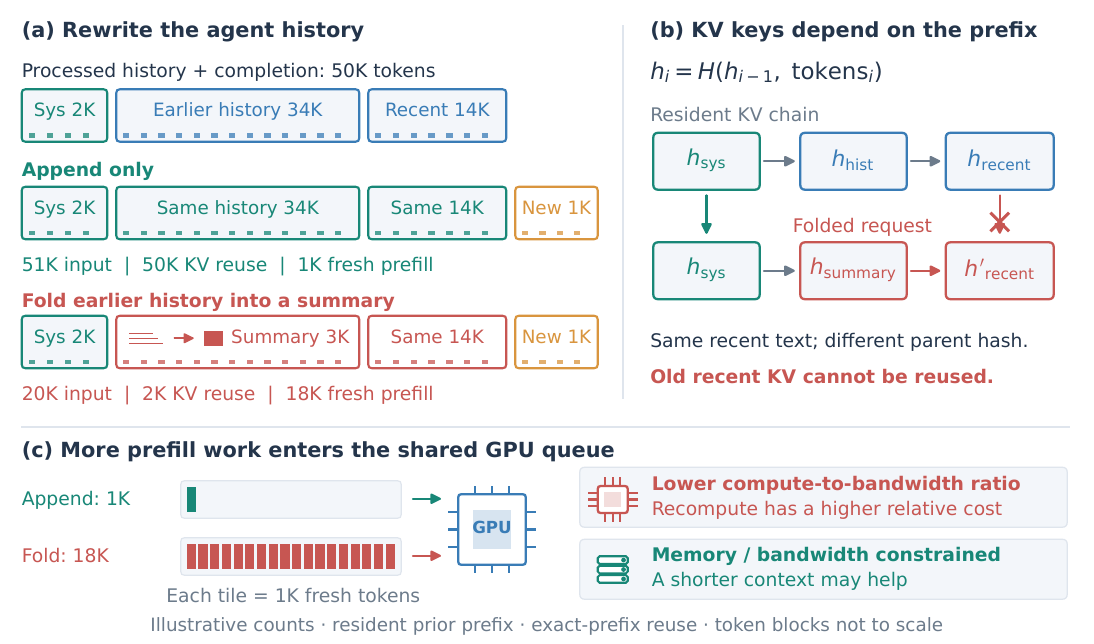}
 \caption{\textbf{Shorter input can require more fresh prefill.} In this illustrative example, retaining a processed 50K-token prefix and appending 1K tokens requires 1K tokens of fresh prefill. Folding history reduces the submitted input to 20K tokens but leaves only a 2K-token exact prefix reusable, requiring 18K fresh tokens. The example assumes resident processed history and omits block rounding. Hardware costs determine the resulting latency.}
 \label{fig:mechanism}
\end{figure}

On RTX~3090 with \qthree{} and GPU prefix caching, OpenHands condensers, which shorten the agent history before each call, change both quantities (Table~\ref{tab:folding}). Recent-event truncation and observation masking cut mean GPU KV utilization from 52.85\% to 25.57\% and 29.15\% and lower the prefix-hit rate from 96.5\% to 60.85\% and 50.80\%; LLM summarization keeps a 92.15\% hit rate at 41.12\% utilization. All three condensers lengthened per-instance inference.

\begin{table}[ht]
 \centering\small
 \caption{\textbf{Condensers that rewrite early history lower the prefix-hit rate.} \qthree{} on $8\!\times$RTX~3090 with GPU prefix caching; run means.}
 \label{tab:folding}
 \begin{tabular}{@{}lrr@{}}
 \toprule
 Condenser & GPU KV utilization & Prefix-hit rate \\
 \midrule
 None (NoOp) & 52.85\% & \textbf{96.50\%} \\
 Recent-event truncation & 25.57\% & 60.85\% \\
 Observation masking & 29.15\% & 50.80\% \\
 LLM summarization & 41.12\% & 92.15\% \\
 \bottomrule
 \end{tabular}
\end{table}

Recent-event truncation retains selected initial and recent events. Moving the retained-history boundary can edit the exact prefix from the first removed token onward. Observation masking replaces selected tool observations, while LLM summarization inserts a generated representation of earlier events. Each operation can preserve an unchanged initial prefix even when later reuse is lost. Completion tokens count as processed history because decoding has already produced their KV states.

Figure~\ref{fig:mechanism} separates submitted context length from fresh prefill. With resident processed history, fresh prefill equals the submitted length minus the cache-stable prompt length (Section~\ref{sec:background}): in the example, folding lowers the submitted length from 51K to 20K tokens and the cache-stable length from 50K to 2K, so fresh prefill rises from 1K to 18K tokens. Reuse is bounded by the cache-stable length, and a context edit is best judged by both lengths. With $N_{\mathrm{inv}}$ additional tokens recomputed once every $L_{\mathrm{cache}}$ turns, the amortized invalidation cost is $N_{\mathrm{inv}}t_{\mathrm{pf}}/L_{\mathrm{cache}}$ per turn at a fixed effective prefill cost. Both quantities depend on serialized prompts; an event threshold alone does not specify the number of model turns. This mechanism motivates accounting for prefix identity alongside token reduction.

%% file: tables/measurements.tex
\newcommand{\EvidencePlatformTable}{%
\begin{table}[ht]
\centering\small
\caption{\textbf{Offload shortens the RTX~3090 runs and lengthens every H800 run for both models.}
Both configurations enable GPU prefix caching. Rows are grouped by timing scope; each ratio (offload/recompute) compares the two within one deployment and scope.
MoE: Qwen3-Coder-30B-A3B-Instruct; dense: Qwen2.5-Coder-32B-Instruct. $^\dagger$Complete SWE-bench Verified benchmark. Bold: offload faster than recomputation.}
\label{tab:evidence-platform}
\setlength{\tabcolsep}{5pt}
\begin{tabular}{@{}llrrr@{}}
\toprule
Hardware & Model & Recompute & Offload & Ratio \\
\midrule
\multicolumn{5}{@{}l}{\emph{Workflow wall-clock}} \\
$8\!\times$RTX 3090 & MoE & 407 min & 369 min & \textbf{0.91} \\
$8\!\times$RTX 3090 & Dense & 443 min & 406 min & \textbf{0.92} \\
$8\!\times$H20 & MoE & 20.58 h & 20.80 h & 1.01 \\
\midrule
\multicolumn{5}{@{}l}{\emph{Engine window}} \\
$2\!\times$H800 & MoE & 62 min & 73 min & 1.18 \\
$2\!\times$H800 & Dense & 74 min & 91 min & 1.23 \\
\midrule
\multicolumn{5}{@{}l}{\emph{Reported inference time}} \\
$2\!\times$H800$^\dagger$ & MoE & 456 min & 492 min & 1.08 \\
$8\!\times$H800 & MoE & 53 min & 99 min & 1.87 \\
\bottomrule
\end{tabular}
\end{table}}

\newcommand{\EvidenceAllReplayTable}{%
\begin{table}[ht]
\centering\small
\caption{\textbf{Complete H20 replay measurements.}
Every completed row processes the same 4,427 calls.
Each row is one run; repeats are retained individually ($^\ast$: see Baselines and policy variants). $A$: active pool; host (CPU) budgets are per rank.
Retrieved and stored token volumes are summed over TP workers and divided by eight.
Preempt.: engine preemptions during the replay. Bold: capacity-conditioned admission.}
\label{tab:evidence-all-replay}
\setlength{\tabcolsep}{3pt}
\begin{tabular}{@{}lrrrrrrr@{}}
\toprule
Configuration & $A$ & CPU GiB & Time (min) & Prefill M & Retrieved M & Stored M & Preempt. \\
\midrule
Recompute & 16 & 0 & 211.69 & 89.66 & 0 & 0 & 2,190 \\
Recompute, repeat & 16 & 0 & 205.98 & 89.82 & 0 & 0 & 2,203 \\
Offload & 16 & 5 & 208.91 & 76.74 & 12.57 & 81.69 & 1,808 \\
Offload & 16 & 10 & 161.06 & 23.53 & 65.20 & 22.42 & 437 \\
Offload & 16 & 20 & 128.05 & 5.28 & 82.77 & 4.06 & 46 \\
Offload & 16 & 40 & 124.01 & 5.27 & 82.43 & 4.06 & 52 \\
Offload, repeat & 16 & 40 & 115.93 & 5.85 & 100.30 & 4.06 & 48 \\
Offload, repeat$^\ast$ & 16 & 40 & 123.74 & 5.29 & 81.86 & 4.06 & 44 \\
Offload & 16 & 80 & 132.82 & 5.29 & 82.89 & 4.06 & 39 \\
Fixed write admission & 16 & 40 & 161.60 & 22.79 & 65.21 & 3.06 & 537 \\
Fixed write admission & 16 & 5 & 186.39 & 48.91 & 41.25 & 2.57 & 1,183 \\
Conditioned admission & 16 & 5 & \textbf{187.08} & \textbf{49.67} & 40.41 & 3.69 & 1,233 \\
Conditioned admission & 16 & 10 & \textbf{157.41} & \textbf{20.11} & 66.88 & 5.97 & 445 \\
Conditioned admission & 16 & 40 & \textbf{124.26} & \textbf{5.28} & 82.98 & 4.06 & 50 \\
\midrule
Recompute & 8 & 0 & 159.21 & 41.48 & 0 & 0 & 843 \\
Offload & 8 & 3 & 151.11 & 28.99 & 13.01 & 50.21 & 547 \\
Offload & 8 & 4.5 & 146.75 & 17.11 & 13.22 & 26.30 & 258 \\
\bottomrule
\end{tabular}
\end{table}}

\newcommand{\EvidenceHitBandTable}{%
\begin{table}[ht]
\centering\small
\caption{\textbf{Prefix-hit medians by GPU KV usage.}
The \qthree{} RTX~3090 and two-H800 subset pairs of Table~\ref{tab:evidence-platform}, both at GPU memory fraction 0.65.
Entries are medians (\%) within each usage band. Without offload, all prefix hits are GPU hits.
With offload, the GPU, external, and total rates are separately computed medians;
the GPU and external medians need not sum to the total. Bold: net hit gain with offload at high GPU KV usage.}
\label{tab:hit-bands}
\setlength{\tabcolsep}{5pt}
\begin{tabular}{@{}llrrrr@{}}
\toprule
 & & Recompute & \multicolumn{3}{c}{Offload} \\
\cmidrule(l){4-6}
Hardware & GPU KV usage & Total & Total & GPU & External \\
\midrule
$8\!\times$RTX 3090 & $<$80\% & 62.9 & 79.4 & 62.7 & 16.6 \\
 & 80--90\% & 34.7 & \textbf{43.7} & 16.9 & 26.7 \\
 & 90--95\% & 34.4 & \textbf{44.0} & 17.1 & 27.2 \\
 & 95--100\% & 33.8 & \textbf{44.3} & 17.3 & 25.9 \\
\midrule
$2\!\times$H800 & $<$80\% & 65.7 & 98.4 & 97.5 & 0.2 \\
 & 80--90\% & 48.8 & 51.7 & 14.2 & 34.6 \\
 & 90--95\% & 47.1 & 49.0 & 14.2 & 34.4 \\
 & 95--100\% & 46.9 & 46.6 & 12.8 & 34.2 \\
\bottomrule
\end{tabular}
\end{table}}

%% file: references.bib
@inproceedings{kwon2023vllm,
  title = {Efficient Memory Management for Large Language Model Serving with {PagedAttention}},
  author = {Kwon, Woosuk and Li, Zhuohan and Zhuang, Siyuan and Sheng, Ying and Zheng, Lianmin and Yu, Cody Hao and Gonzalez, Joseph E. and Zhang, Hao and Stoica, Ion},
  booktitle = {Proceedings of the 29th Symposium on Operating Systems Principles},
  pages = {611--626},
  year = {2023},
  doi = {10.1145/3600006.3613165}
}

@inproceedings{zheng2023sglang,
  title = {{SGLang}: Efficient Execution of Structured Language Model Programs},
  author = {Zheng, Lianmin and Yin, Liangsheng and Xie, Zhiqiang and Sun, Chuyue and Huang, Jeff and Yu, Cody Hao and Cao, Shiyi and Kozyrakis, Christos and Stoica, Ion and Gonzalez, Joseph E. and Barrett, Clark and Sheng, Ying},
  booktitle = {Advances in Neural Information Processing Systems},
  volume = {37},
  year = {2024}
}

@inproceedings{wang2024openhands,
  title = {{OpenHands}: An Open Platform for {AI} Software Developers as Generalist Agents},
  author = {Wang, Xingyao and Li, Boxuan and Song, Yufan and Xu, Frank F. and Tang, Xiangru and Zhuge, Mingchen and Pan, Jiayi and Song, Yueqi and Li, Bowen and Singh, Jaskirat and Tran, Hoang H. and Li, Fuqiang and Ma, Ren and Zheng, Mingzhang and Qian, Bill and Shao, Yanjun and Muennighoff, Niklas and Zhang, Yizhe and Hui, Binyuan and Lin, Junyang and Brennan, Robert and Peng, Hao and Ji, Heng and Neubig, Graham},
  booktitle = {International Conference on Learning Representations},
  year = {2025}
}

@inproceedings{yang2024sweagent,
  title = {{SWE-agent}: Agent-Computer Interfaces Enable Automated Software Engineering},
  author = {Yang, John and Jimenez, Carlos E. and Wettig, Alexander and Lieret, Kilian and Yao, Shunyu and Narasimhan, Karthik and Press, Ofir},
  booktitle = {Advances in Neural Information Processing Systems},
  volume = {37},
  year = {2024}
}

@inproceedings{jimenez2024swebench,
  title = {{SWE-bench}: Can Language Models Resolve Real-World {GitHub} Issues?},
  author = {Jimenez, Carlos E. and Yang, John and Wettig, Alexander and Yao, Shunyu and Pei, Kexin and Press, Ofir and Narasimhan, Karthik},
  booktitle = {International Conference on Learning Representations},
  year = {2024}
}

@article{liu2025lmcache,
  title = {{LMCache}: An Efficient {KV} Cache Layer for Enterprise-Scale {LLM} Inference},
  author = {Liu, Yuhan and Yao, Jiayi and Cheng, Yihua and An, Yuwei and Chen, Xiaokun and Feng, Shaoting and Huang, Yuyang and Shen, Samuel and Zhang, Rui and Du, Kuntai and Jiang, Junchen},
  journal = {arXiv preprint arXiv:2510.09665},
  year = {2025}
}

@inproceedings{peng2024mooncake,
  title = {{Mooncake}: Trading More Storage for Less Computation --- {A} {KVCache}-centric Architecture for Serving {LLM} Chatbot},
  author = {Qin, Ruoyu and Li, Zheming and He, Weiran and Cui, Jialei and Ren, Feng and Zhang, Mingxing and Wu, Yongwei and Zheng, Weimin and Xu, Xinran},
  booktitle = {23rd USENIX Conference on File and Storage Technologies (FAST 25)},
  pages = {155--170},
  year = {2025}
}

@inproceedings{liu2024kivi,
  title = {{KIVI}: A Tuning-Free Asymmetric 2bit Quantization for {KV} Cache},
  author = {Liu, Zirui and Yuan, Jiayi and Jin, Hongye and Zhong, Shaochen and Xu, Zhaozhuo and Braverman, Vladimir and Chen, Beidi and Hu, Xia},
  booktitle = {Proceedings of the 41st International Conference on Machine Learning},
  series = {Proceedings of Machine Learning Research},
  volume = {235},
  pages = {32332--32344},
  year = {2024}
}

@inproceedings{liu2024cachegen,
  title = {{CacheGen}: {KV} Cache Compression and Streaming for Fast Large Language Model Serving},
  author = {Liu, Yuhan and Li, Hanchen and Cheng, Yihua and Ray, Siddhant and Huang, Yuyang and Zhang, Qizheng and Du, Kuntai and Yao, Jiayi and Lu, Shan and Ananthanarayanan, Ganesh and Maire, Michael and Hoffmann, Henry and Holtzman, Ari and Jiang, Junchen},
  booktitle = {Proceedings of the ACM SIGCOMM 2024 Conference},
  pages = {38--56},
  year = {2024},
  doi = {10.1145/3651890.3672274}
}

@inproceedings{pan2025kvflow,
  title = {{KVFlow}: Efficient Prefix Caching for Accelerating {LLM}-Based Multi-Agent Workflows},
  author = {Pan, Zaifeng and Patel, Ajjkumar and Shen, Yipeng and Hu, Zhengding and Guan, Yue and Li, Wan-Lu and Qin, Lianhui and Wang, Yida and Ding, Yufei},
  booktitle = {Advances in Neural Information Processing Systems},
  volume = {38},
  year = {2025}
}

@article{wang2026swepruner,
  title = {{SWE-Pruner}: Self-Adaptive Context Pruning for Coding Agents},
  author = {Wang, Yuhang and Shi, Yuling and Yang, Mo and Zhang, Rongrui and He, Shilin and Lian, Heng and Chen, Yuting and Ye, Siyu and Cai, Kai and Gu, Xiaodong},
  journal = {arXiv preprint arXiv:2601.16746},
  year = {2026}
}

@inproceedings{sun2025contextfolding,
  title = {Scaling Long-Horizon Agent via Context Folding},
  author = {Sun, Weiwei and Lu, Miao and Ling, Zhan and Liu, Kang and Yao, Xuesong and Yang, Yiming and Chen, Jiecao},
  booktitle = {Proceedings of the 43rd International Conference on Machine Learning},
  year = {2026},
  note = {Preprint titled ``Scaling Long-Horizon {LLM} Agent via Context-Folding'', arXiv:2510.11967}
}

@misc{vllmPrefixCaching,
  author = {{vLLM Team}},
  title = {{vLLM v0.13.0}: Automatic Prefix Caching},
  year = {2025},
  howpublished = {\url{https://docs.vllm.ai/en/v0.13.0/design/prefix_caching/}},
  note = {Accessed September 26, 2026}
}

@misc{qwen3CoderCard,
  author = {{Qwen Team}},
  title = {{Qwen3-Coder-30B-A3B-Instruct} Model Card},
  year = {2025},
  howpublished = {\url{https://huggingface.co/Qwen/Qwen3-Coder-30B-A3B-Instruct}}
}

@misc{qwen25CoderCard,
  author = {{Qwen Team}},
  title = {{Qwen2.5-Coder-32B-Instruct} Model Card},
  year = {2024},
  howpublished = {\url{https://huggingface.co/Qwen/Qwen2.5-Coder-32B-Instruct}}
}

@misc{topas2026,
 author={Hongqiu Ni and Han Tian and Chi Zhang and Guopeng Li and Haisheng Tan},
 title={{TOPAS}: Workflow-Aware Prefix-State Scheduling for Multi-Agent {LLM} Serving},
 year={2026}, eprint={2608.25523}, archivePrefix={arXiv},
 howpublished={arXiv preprint arXiv:2608.25523},
 url={https://arxiv.org/abs/2608.25523}
}

@misc{continuum2025,
 author={Hanchen Li and Runyuan He and Qiuyang Mang and Qizheng Zhang and Huanzhi Mao and Xiaokun Chen and Hangrui Zhou and Huanchen Zhang and Alvin Cheung and Joseph Gonzalez and Ion Stoica},
 title={{Continuum}: Efficient and Robust Multi-Turn {LLM} Agent Scheduling with {KV} Cache Time-to-Live},
 year={2026}, eprint={2511.02230}, archivePrefix={arXiv},
 howpublished={arXiv preprint arXiv:2511.02230},
 note={Version 7, revised September 2026},
 url={https://arxiv.org/abs/2511.02230v7}
}

@article{denning1968,
 author={Peter J. Denning},
 title={The Working Set Model for Program Behavior},
 journal={Communications of the ACM},
 volume={11}, number={5}, pages={323--333}, year={1968},
 doi={10.1145/363095.363141}
}

@article{denning1980,
 author={Peter J. Denning},
 title={Working Sets Past and Present},
 journal={IEEE Transactions on Software Engineering},
 volume={SE-6}, number={1}, pages={64--84}, year={1980},
 doi={10.1109/TSE.1980.230464}
}

@article{mattson1970,
 author={R. L. Mattson and J. Gecsei and D. R. Slutz and I. L. Traiger},
 title={Evaluation Techniques for Storage Hierarchies},
 journal={IBM Systems Journal},
 volume={9}, number={2}, pages={78--117}, year={1970},
 doi={10.1147/sj.92.0078}
}

@inproceedings{abhyankar2024infercept,
  title = {{InferCept}: Efficient Intercept Support for Augmented Large Language Model Inference},
  author = {Abhyankar, Reyna and He, Zijian and Srivatsa, Vikranth and Zhang, Hao and Zhang, Yiying},
  booktitle = {Proceedings of the 41st International Conference on Machine Learning},
  series = {Proceedings of Machine Learning Research},
  volume = {235},
  pages = {81--95},
  year = {2024}
}

@inproceedings{gao2024cachedattention,
  title = {Cost-Efficient Large Language Model Serving for Multi-turn Conversations with {CachedAttention}},
  author = {Gao, Bin and He, Zhuomin and Sharma, Puru and Kang, Qingxuan and Jevdjic, Djordje and Deng, Junbo and Yang, Xingkun and Yu, Zhou and Zuo, Pengfei},
  booktitle = {2024 USENIX Annual Technical Conference (USENIX ATC 24)},
  pages = {111--126},
  year = {2024}
}

@inproceedings{yu2025pensieve,
  title = {Stateful Large Language Model Serving with {Pensieve}},
  author = {Yu, Lingfan and Lin, Jinkun and Li, Jinyang},
  booktitle = {Proceedings of the Twentieth European Conference on Computer Systems},
  pages = {144--158},
  year = {2025},
  doi = {10.1145/3689031.3696086}
}

@inproceedings{gao2025hcache,
  title = {Fast State Restoration in {LLM} Serving with {HCache}},
  author = {Gao, Shiwei and Chen, Youmin and Shu, Jiwu},
  booktitle = {Proceedings of the Twentieth European Conference on Computer Systems},
  pages = {128--143},
  year = {2025},
  doi = {10.1145/3689031.3696072}
}

@inproceedings{sheng2023flexgen,
  title = {{FlexGen}: High-Throughput Generative Inference of Large Language Models with a Single {GPU}},
  author = {Sheng, Ying and Zheng, Lianmin and Yuan, Binhang and Li, Zhuohan and Ryabinin, Max and Chen, Beidi and Liang, Percy and R{\'e}, Christopher and Stoica, Ion and Zhang, Ce},
  booktitle = {Proceedings of the 40th International Conference on Machine Learning},
  series = {Proceedings of Machine Learning Research},
  volume = {202},
  pages = {31094--31116},
  year = {2023}
}

@inproceedings{lee2024infinigen,
  title = {{InfiniGen}: Efficient Generative Inference of Large Language Models with Dynamic {KV} Cache Management},
  author = {Lee, Wonbeom and Lee, Jungi and Seo, Junghwan and Sim, Jaewoong},
  booktitle = {18th USENIX Symposium on Operating Systems Design and Implementation (OSDI 24)},
  pages = {155--172},
  year = {2024}
}

@inproceedings{srivatsa2025preble,
  title = {{Preble}: Efficient Distributed Prompt Scheduling for {LLM} Serving},
  author = {Srivatsa, Vikranth and He, Zijian and Abhyankar, Reyna and Li, Dongming and Zhang, Yiying},
  booktitle = {International Conference on Learning Representations},
  year = {2025}
}

@inproceedings{luo2026agentix,
  title = {{Agentix}: An Efficient Serving Engine for {LLM} Agents as General Programs},
  author = {Luo, Michael and Shi, Xiaoxiang and Cai, Colin and Zhang, Tianjun and Wong, Justin and Wang, Yichuan and Wang, Chi and Huang, Yanping and Chen, Zhifeng and Gonzalez, Joseph E. and Stoica, Ion},
  booktitle = {23rd USENIX Symposium on Networked Systems Design and Implementation (NSDI 26)},
  pages = {2443--2459},
  year = {2026},
  note = {Preprint titled ``{Autellix}: An Efficient Serving Engine for {LLM} Agents as General Programs'', arXiv:2502.13965}
}

@inproceedings{pan2025marconi,
  title = {{Marconi}: Prefix Caching for the Era of Hybrid {LLMs}},
  author = {Pan, Rui and Wang, Zhuang and Jia, Zhen and Karakus, Can and Zancato, Luca and Dao, Tri and Wang, Yida and Netravali, Ravi},
  booktitle = {Proceedings of Machine Learning and Systems},
  volume = {7},
  year = {2025}
}

@inproceedings{wang2025kvcachewild,
  title = {{KVCache} Cache in the Wild: Characterizing and Optimizing {KVCache} Cache at a Large Cloud Provider},
  author = {Wang, Jiahao and Han, Jinbo and Wei, Xingda and Shen, Sijie and Zhang, Dingyan and Fang, Chenguang and Chen, Rong and Yu, Wenyuan and Chen, Haibo},
  booktitle = {2025 USENIX Annual Technical Conference (USENIX ATC 25)},
  pages = {465--482},
  year = {2025}
}

@misc{xia2026mori,
  author = {Tian Xia and Hanchen Li and Zhifei Li and Xiaokun Chen and Hao Kang and Yifan Qiao and Yi Xu and Ion Stoica},
  title = {Idleness is Relative: Exploiting Tool-Call Idle Windows for Offloading in Agentic Systems with {MORI}},
  year = {2026}, eprint = {2606.00866}, archivePrefix = {arXiv},
  howpublished = {arXiv preprint arXiv:2606.00866},
  url = {https://arxiv.org/abs/2606.00866}
}

@article{einziger2017tinylfu,
  title = {{TinyLFU}: A Highly Efficient Cache Admission Policy},
  author = {Einziger, Gil and Friedman, Roy and Manes, Ben},
  journal = {ACM Transactions on Storage},
  volume = {13},
  number = {4},
  pages = {35:1--35:31},
  year = {2017},
  doi = {10.1145/3149371}
}

@inproceedings{berger2017adaptsize,
  title = {{AdaptSize}: Orchestrating the Hot Object Memory Cache in a Content Delivery Network},
  author = {Berger, Daniel S. and Sitaraman, Ramesh K. and Harchol-Balter, Mor},
  booktitle = {14th USENIX Symposium on Networked Systems Design and Implementation (NSDI 17)},
  pages = {483--498},
  year = {2017}
}

@inproceedings{waldspurger2015shards,
  title = {Efficient {MRC} Construction with {SHARDS}},
  author = {Waldspurger, Carl A. and Park, Nohhyun and Garthwaite, Alexander and Ahmad, Irfan},
  booktitle = {13th USENIX Conference on File and Storage Technologies (FAST 15)},
  pages = {95--110},
  year = {2015}
}

@inproceedings{lai2001deadblock,
  title = {Dead-Block Prediction \& Dead-Block Correlating Prefetchers},
  author = {Lai, An-Chow and Fide, Cem and Falsafi, Babak},
  booktitle = {Proceedings of the 28th Annual International Symposium on Computer Architecture (ISCA)},
  pages = {144--154},
  year = {2001},
  doi = {10.1145/379240.379259}
}

@inproceedings{khan2010sampling,
  title = {Sampling Dead Block Prediction for Last-Level Caches},
  author = {Khan, Samira M. and Tian, Yingying and Jim{\'e}nez, Daniel A.},
  booktitle = {Proceedings of the 43rd Annual IEEE/ACM International Symposium on Microarchitecture (MICRO)},
  pages = {175--186},
  year = {2010},
  doi = {10.1109/MICRO.2010.24}
}

@article{johnson1999bypass,
  title = {Run-Time Cache Bypassing},
  author = {Johnson, Teresa L. and Connors, Daniel A. and Merten, Matthew C. and Hwu, Wen-mei W.},
  journal = {IEEE Transactions on Computers},
  volume = {48},
  number = {12},
  pages = {1338--1354},
  year = {1999},
  doi = {10.1109/12.817393}
}

@inproceedings{jaleel2010rrip,
  title = {High Performance Cache Replacement Using Re-Reference Interval Prediction ({RRIP})},
  author = {Jaleel, Aamer and Theobald, Kevin B. and Steely, Jr., Simon C. and Emer, Joel},
  booktitle = {Proceedings of the 37th Annual International Symposium on Computer Architecture (ISCA)},
  pages = {60--71},
  year = {2010},
  doi = {10.1145/1815961.1815971}
}

@misc{luo2025deepswe,
  title = {{DeepSWE}: Training a Fully Open-sourced, State-of-the-Art Coding Agent by Scaling {RL}},
  author = {Luo, Michael and Jain, Naman and Singh, Jaskirat and Tan, Sijun and Patel, Ameen and Wu, Qingyang and Ariyak, Alpay and Cai, Colin and Venkat, Tarun and Zhu, Shang and Athiwaratkun, Ben and Roongta, Manan and Zhang, Ce and Li, Li Erran and Popa, Raluca Ada and Sen, Koushik and Stoica, Ion},
  howpublished = {\url{https://www.together.ai/blog/deepswe}},
  note = {Agentica and Together AI blog post},
  year = {2025}
}

@article{cao2025skyrlagent,
  title = {{SkyRL-Agent}: Efficient {RL} Training for Multi-turn {LLM} Agent},
  author = {Cao, Shiyi and Li, Dacheng and Zhao, Fangzhou and Yuan, Shuo and Hegde, Sumanth R. and Chen, Connor and Ruan, Charlie and Griggs, Tyler and Liu, Shu and Tang, Eric and Liaw, Richard and Moritz, Philipp and Zaharia, Matei and Gonzalez, Joseph E. and Stoica, Ion},
  journal = {arXiv preprint arXiv:2511.16108},
  year = {2025}
}

@inproceedings{pan2025swegym,
  title = {Training Software Engineering Agents and Verifiers with {SWE-Gym}},
  author = {Pan, Jiayi and Wang, Xingyao and Neubig, Graham and Jaitly, Navdeep and Ji, Heng and Suhr, Alane and Zhang, Yizhe},
  booktitle = {Proceedings of the 42nd International Conference on Machine Learning},
  series = {Proceedings of Machine Learning Research},
  volume = {267},
  pages = {47717--47737},
  year = {2025}
}
